\newtheorem{theorem}{Theorem}
\newtheorem{proposition}{Proposition}
\newtheorem{corollary}{Corollary}
\newtheorem{definition}{Definition}
\begin{document}

\title{Wireless Compressive Sensing Over Fading Channels with Distributed Sparse Random Projections }

\author{\authorblockA{Thakshila Wimalajeewa  \emph{Member, IEEE}, and Pramod K.
Varshney, \emph{Fellow IEEE}}
}

\maketitle\thispagestyle{empty}

%
%

%

\begin{abstract}
We address the problem of recovering a sparse signal observed by a  resource constrained wireless sensor network under channel fading.
Sparse random matrices are  exploited  to reduce the communication cost in forwarding information to a fusion center.  The presence of channel fading leads to inhomogeneity and non Gaussian statistics  in the effective measurement matrix that relates the  measurements  collected  at the fusion center and the sparse signal   being observed.
We analyze the impact of  channel fading on \emph{nonuniform} recovery  of a given  sparse signal by leveraging the properties of heavy-tailed  random matrices.
We quantify the additional number of measurements required to ensure reliable signal recovery in the presence of nonidentical fading channels compared  to that is required with identical  Gaussian channels. Our analysis  provides insights into how to control  the probability of sensor transmissions at each node based on the channel fading statistics  in order to minimize the number of measurements collected at the fusion center for   reliable sparse signal recovery. We further discuss recovery guarantees of a given sparse signal  with any random projection matrix where the elements are sub-exponential with a given sub-exponential norm.   Numerical results are provided to corroborate the theoretical findings.
\end{abstract}
EDICS: ADEL-DIP, CNS-SPDCN

\footnotetext[1]{The authors   are  with the Department of Electrical Engineering and Computer Science, Syracuse University, Syracuse NY 13244. Email: twwewelw@syr.edu, varshney@syr.edu.  This work is  supported by the National Science
Foundation (NSF) under Grant No. 1307775.}

\section{Introduction}
Consider  a wireless sensor network (WSN)   deployed to observe   a compressible signal. The goal is to reconstruct the observed  signal at a distant  fusion center utilizing available network resources efficiently. In order to reduce the energy consumption while forwarding observations to a fusion center, some  preprocessing  is desired so that the fusion center has access to only informative data just querying only a subset of sensors.  Use of compressive sensing (CS) techniques for   compressible data processing  in wireless sensor networks has attracted  attention in the recent literature   \cite{haupt4,wang_ISPN1,Shen_Sensor1,Yang_TSP1,Baron1,Ling_TSP1,
Caione_II1,Sartipi_DCC1,thakshila_icassp1,thakshila_TSP_14,yujiao_icassp1,Fazel_TWC13,Colonnese_EURA13}.
 In \cite{haupt4}, the authors have proposed a multiple access channel (MAC) communication architecture so that the fusion center receives a  compressed version (represented by a  low dimensional linear transformation)  of the original signal observed at multiple nodes. According to that model, the corresponding linear operator is a dense random matrix. Thus,   almost all the sensors in  the network have to participate in forwarding observations consuming a large amount of energy.      The application of sparse random matrices to reduce the communication burden for wireless compressive sensing (WCS) has been addressed by several authors  \cite{wang_ISPN1,Shen_Sensor1,Yang_TSP1} so that not all the sensors forward  observations.  In \cite{yujiao_icassp1}, the authors provide a probabilistic sensor management scheme for target tracking in a WSN exploiting sparse random matrices. In these approaches, the  sparse random matrix is considered to be a { sparse Rademacher}   matrix in which elements may take values  $(+1,0,-1)$ with desired probabilities.   The use of sparse random matrices instead of dense matrices in signal recovery in a general framework (not necessarily in sensor networks)  has been further discussed in several works \cite{Achlioptas_J1,wang5,Gilbert_Proc1,P_Li_KDD1}.

 In practical communication networks, the communication channels between sensor nodes and  the fusion center undergo fading. The presence of fading  affects the recovery capabilities since it leads to inhomogeneity and non Gaussian statistics  in measurement matrices.  In  \cite{Yang_TSP1}, the problem of sparse signal recovery in the presence of fading is addressed where the authors  provide \emph{uniform} recovery guarantees based on restricted isometry property (RIP) considering   sparse Bernoulli matrices.   Two kinds of recovery guarantees with low dimensional random projection matrices  are widely discussed in the CS  literature \cite{candes5,donoho1,candes_TIT1,Eldar_B1,Rauhut_J1}: \emph{uniform} and \emph{nonuniform} recovery guarantees.  A \emph{uniform} recovery guarantee ensures  that for a given draw of the random projection matrix, all possible $k$-sparse signals are recovered with high probability.  On the other hand, \emph{nonuniform} recovery guarantee provides the conditions under which a given $k$-sparse signal (but not \emph{any} $k$-sparse signal as considered  in uniform recovery) can be reconstructed with a given draw of a random measurement matrix. Thus, \emph{uniform}  recovery focuses on the worst case recovery  guarantees  while \emph{nonuniform} recovery captures the typical recovery behavior of the measurement matrix.

  In this paper, the  goal is to enhance our understanding of  recovering  a given sparse signal  with  sparse random matrices  in the presence of channel fading.  More specifically, we provide lower bounds on the number of measurements that should be collected by the fusion center in order to achieve \emph{nonuniform} recovery guarantees with $l_1$ norm minimization based recovery with  independent (not necessarily identical) channel fading. With sparse random projections, the nodes transmit their observations with a certain probability. We further discuss how to design probabilities of transmissions by each node (equivalently the sparsity  parameter of the random projection matrix)    based on the channel fading statistics so that  the number of measurements required for signal recovery  at the fusion center is minimized.



While the authors in  \cite{Yang_TSP1} consider a similar problem of WCS, our analysis is different from that in  \cite{Yang_TSP1} in several ways.  In this paper,  we derive  \emph{nonuniform} recovery guarantees which require different derivations (not based on RIP) and provide   better recovery results  compared to  \emph{uniform} recovery as considered  in  \cite{Yang_TSP1}. {It is noted  that, the RIP
measure is  defined with respect to the worst-possible performance. Eventhough RIP  analysis adopts a probabilistic point of
view, the subsequent  results tend to be overly restrictive, leading
to a wide gap between theoretical predictions and actual performance
\cite{candes_TIT1}. With a given signal of interest, one can obtain  stronger  results. To that end, nonuniform recovery guarantees, as considered in this paper are  able to capture the typical recovery behavior of the projection matrix leading to  stronger results}.   We assume    envelope detection at the fusion center which  is employed in practice  in many sensor networks.  { More specifically, we assume that the channel phase is corrected to ensure phase coherence  which is a widely used  assumption in the  sensor network literature. As discussed in \cite{Cohen_TIT_13,Chen_TSP_07}, this can be achieved by  transmitting a
pilot signal by the fusion center  before the sensor transmissions to
estimate the channel phase}.  Further, the nonzero elements of the sparse matrices are assumed to be Gaussian.  Thus, the statistics of the low dimensional linear operator that relates the input and the output at the fusion center are  different from that in \cite{Yang_TSP1}. In particular, with the model considered in this paper,  the elements of the random projection matrix after taking channel fading into account reduce to independent but nonidentical  sub-exponential random variables. To the best of our knowledge,  \emph{nonuniform} recovery of a given sparse signal with nonidentical  sub-exponential (or heavy-tailed)  random matrices has not been well investigated in the literature.  Thus, the  analysis in this paper  further enhances our understanding on sparse recovery with sub-exponential random matrices in general.   Further, we  show that  the  number of measurements required to reconstruct  a given sparse signal can be reduced by designing  probabilities of transmission at each node  based on fading channel statistics.   In addition, our  results are in general not asymptotic while the results in \cite{Yang_TSP1} are asymptotic in nature.

Our main results are summarized below. In the presence of independent channel fading with Rayleigh distribution, we show  that the nodes  should transmit with a probability that is inversely proportional to the   fading channel statistics (channel power)   in order to reduce the number of measurements collected at the fusion center in recovering a given sparse signal. With this design of probabilities of transmissions, the number of measurements required to recover a given sparse signal  with sparsity index $k$  scales as $\left(\sqrt{\frac{\nu_{\max}^2}{\nu_{\min}^2}} k\log N\right)$ where $2\nu_{\max}^2$ and $2\nu_{\min}^2$ are the largest and smallest average mean power coefficients of Rayleigh fading channels, and  $N$ is the number of nodes in the network (which is assumed to be the same as the dimension of the sparse signal).   This says, by controlling the probability of transmission based on fading channel statistics, the impact of  inhomogeneity of the elements of the measurement matrix on signal recovery can be reduced leading to better recovery guarantees. In the special case where the fading channels are assumed to be identical and all the nodes transmit with the same probability  (say $0<\gamma\leq 1$), we show that $\mathcal O\left(\frac{k}{\sqrt{\gamma}}\log N\right)$ MAC transmissions are sufficient to recover a given sparse signal.
We further, provide detailed analysis on  recovery guarantees of a given sparse signal  with any random projection matrix where the elements are sub-exponential with a given sub-exponential norm.

The rest of the paper is organized as follows. In Section \ref{motivation}, the problem formulation is given. Recovery guarantees of a given sparse signal under independent channel fading are provided  in Section \ref{recovery}. We  discuss how to design probabilities of transmission based on channel fading statistics. Further, the results are specified when the  fading  channels are identical.  In Section \ref{nonuniform_anysubexp}, the conditions under which a given sparse signal can be recovered with any sub-exponential random matrix are discussed.   Numerical results are presented in Section \ref{numerical} and concluding remarks are given in Section \ref{conclusion}.

\subsection{Notation}
The following notation is used throughout the paper. Lower case boldface letters, e.g.,  $\mathbf x$ are used to denote vectors and the $j$-th element of $\mathbf x$ is denoted by $\mathbf x(j)$. Lower case letters are used to denote scalars, e.g.,  $x$.
Both upper case boldface letters and boldface symbols  are used to denote matrices, e.g.,  $\mathbf A$, $\boldsymbol\Phi$.  The notations, $\mathbf A_i$, $\mathbf a_i$ and $\mathbf A_{ij}$ are used to denote the $i$-th row, $i$-th column  and the $(i,j)$-th element of the matrix $\mathbf A$, respectively.  The transpose of a matrix or a vector is denoted by $(.)^T$ and $(\mathbf A)^{\dagger}$ denotes the Moore-Penrose pseudo inverse  of $\mathbf A$. The notation $\otimes$ denotes the outer product of two vectors.   Upper case letters with calligraphic font, e.g.,  $\mathcal S$, are used to denote sets.   The $l_p$ norm of a vector $\mathbf x$ is denoted by $||\mathbf x||_p$. The spectral norm of a matrix $\mathbf A$ is denoted by $||\mathbf A||$. We use the notation $|.|$ to denote the absolute value of a scalar, as well as the cardinality of a set. We use $\mathbf I_N$ to denote the identity matrix of dimension $N$ (we avoid using subscript when there is no ambiguity). A diagonal matrix in which the main diagonal consists of the vector $\mathbf x$ is denoted by $\mathrm{diag}(\mathbf x)$.  By $s_{\min}(\mathbf A)$ and $s_{\max}(\mathbf A)$, we denote the minimum and maximum singular values, respectively,  of the matrix $\mathbf A$. The notation $\mathrm{Rayleigh}(\sigma)$ denotes that a random variable $x$ has a Rayleigh distribution with the probability density function (pdf) $f(x) = \frac{x}{\sigma^2} e^{-\frac{x^2}{2\sigma^2}} $ for $x \geq 0$. The notation $x \sim \mathcal N(\mu, \sigma^2)$ denotes that the random variable $x$ is distributed as Gaussian with the pdf $f(x) =  \frac{1}{\sqrt{2\pi\sigma^2}} e^{-\frac{(x-\mu)^2}{2\sigma^2}}$.

\section{Problem Formulation}\label{motivation}
\subsection{Observation model}\label{Sec_obs}
Consider a distributed sensor network  measuring compressible (sparse) data using $N$ number of nodes. The observation collected at the $i$-th  sensor node is denoted by $x_i$ for $i=0,1,\cdots, N-1$. Let $\mathbf x=[x_0, \cdots, x_{N-1}]^T$  be the vector containing all the measurements of sensors. Sparsity is a common characteristic  observed with the data collected in  sensor networks. The sparsity may appear as an inherent property of the  signal being observed by multiple sensors, e.g., most acoustic data has a sparse representation in Fourier  domain. On the other hand, not all the observations collected at  nodes are informative; for example, the sensors located far away from the phenomenon being observed may contain poor observations making the observation vector $\mathbf x$ sparse in the canonical basis. In a general framework, assume that the signal $\mathbf x$ is sparse in some basis $\boldsymbol\Phi$. One of the fundamental tasks in many sensor networking applications is to reconstruct the  signal observed by nodes at a distant  fusion center. Due to inherent resource constraints in sensor networks, it is desirable that sensors use only small amount of energy and low   bandwidth while forwarding information to the fusion center.   With recent advances in the  theory of CS, WCS with random measurement matrices is becoming attractive. A compressed version of $\mathbf x$ can be transmitted to a fusion sensor exploiting coherent transmission schemes developed for sensor networks \cite{haupt4}.

Consider that  the $j$-th   sensor multiplies its observation during the $i$-th transmission  by $\mathbf A_{ij}$  which is a scalar (to be defined later). All the nodes transmit their scaled  observations coherently using  $M$  (time or frequency) slots.  { In this paper, we consider  the amplify-and-forward (AF) approach for sensor transmissions. It is noted that  a  digital approach can be used where
we digitize the observation into bits, possibly apply channel
coding, and then use digital modulation schemes to transmit the
data, for example, as considered in \cite{Bajwa_TIT_07}. However, as shown in  \cite{Gastpar_TIT_03}  for a single Gaussian source
with an AWGN channel, the AF approach is optimal. Analog transmission
schemes over MAC for detection and estimation using WSN
have been widely  investigated, for example, in \cite{Gastpar_TIT_08,mergen1,Marano_TSP_07}. Thus, we restrict our analysis in this paper to analog  transmission, while digital modulated signals will be considered in a future work}.  We further  assume that the channels between the sensors and the fusion center undergo flat fading. We further assume phase coherent reception, thus the effect of fading is  reflected as a scalar multiplication.
The received signal at the fusion center with the $i$-th MAC transmission is given by,
\begin{eqnarray}
y_i = \sum_{j=0}^{N-1} h_{ij}\mathbf A_{ij}x_j + v_i\label{obs_1}
\end{eqnarray}
for $i=0,1,\cdots, M-1$ where $h_{ij}$ is the channel coefficient for  the channel  between the $j$-th sensor and the fusion center during  the $i$-th transmission and $v_i$ is the additive noise with mean zero and variance $\sigma_v^2$.

Due to energy constraints in sensor networks, we consider  a scenario where not all  the nodes transmit during each MAC transmission. To achieve this, $\mathbf A_{ij}$ is selected as:
\begin{eqnarray}
\mathbf A_{ij}= \left\{
\begin{array}{cc}
a_{ij} & \mathrm{with}~\mathrm{prob}~ \gamma_{j}\\
0 & \mathrm{with}~\mathrm{prob}~ 1-  \gamma_{j}
\end{array}\right.\label{A_ij}
\end{eqnarray}
where $a_{ij} \sim \mathcal N(0,\sigma_{j}^2)$ and $0 < \gamma_j \leq 1$ is the probability of transmission of the $j$-th node.  { The  average power  used by the $j$-th sensor  during  the $i$-th MAC transmission is  $\mathbb E\{\mathbf A_{ij}^2\} =\gamma_{j}  \sigma_{j}^2$ which is assumed to be less than $  E_{j}$ where $ E_{j}$ is determined based on the available energy  at  the $j$-th node.  We  assume that the $j$-th node uses the same transmit  power  on an average  during all MAC transmissions.}   Let $\mathbf A$ be  a   $M\times N$ matrix in which $(i,j)$-th element is given by  $\mathbf A_{ij}$ as in (\ref{A_ij}). Further, let $\mathbf H$ be a $M\times N$ matrix in which $(i,j)$-th element is given by $h_{ij}$.
With  vector-matrix notation,  (\ref{obs_1}) can be written as,
\begin{eqnarray}
\mathbf y = \mathbf B \mathbf x + \mathbf v \label{obs_2}
\end{eqnarray}
where $\mathbf B = \mathbf H \odot \mathbf A$,  $\odot$ is the Hadamard (element-wise) product, $\mathbf y = [y_0, \cdots, y_{M-1}]^T$ and $\mathbf v=[v_0, \cdots, v_{M-1}]^T$. Let $\boldsymbol\gamma = [\gamma_0, \cdots, \gamma_{N-1}]^T$. The vector  $\boldsymbol\gamma$ is used to refer to the measurement sparsity of the matrix  $\mathbf B$ or equivalently the probabilities of transmission of  all the nodes. The goal is to recover $\mathbf x$ based on (\ref{obs_2}).  One of the widely used approaches for sparse recovery is to solve the following optimization problem \cite{Eldar_B1}:
  \begin{eqnarray}
\underset{\mathbf x}{\min} ||\mathbf x||_{1} ~ \mathrm{such }~ \mathrm{that} ~ \mathbf y = \mathbf B \mathbf x \label{l_1}
\end{eqnarray}
with no noise, or
\begin{eqnarray}
\underset{\mathbf x}{\min}  ||\mathbf x||_{1}  ~ \mathrm{such}~ \mathrm{that} ~ ||\mathbf y - \mathbf B \mathbf x||_2 \leq \epsilon_v\label{l_12}
\end{eqnarray}
with noise where $\epsilon_v$ bounds   the size of the noise term $\mathbf v$.

 It is noted that, when $\gamma_{j} = 1$ and $\sigma_{j}^2 = \sigma_{a}^2$ for all $j$,  the elements of $\mathbf A$  are  independent and identically distributed (iid) Gaussian with mean zero and variance $\sigma_a^2$. Then, if we further  assume AWGN channels so that $\mathbf B = \mathbf A$,  $\mathbf B$ is a random matrix with iid Gaussian random variables.  Sparse signal  recovery with iid Gaussian random matrices  has been  extensively studied \cite{Eldar_B1,candes_TIT1}.
Under fading, the matrix $\mathbf A$ is multiplied (element-wise)  by another random matrix $\mathbf H$ which has  independent and nonidentical elements. Thus, the recovery capability of (\ref{l_1}) (or (\ref{l_12})) depends  on the properties of the matrix $\mathbf B = \mathbf H \odot \mathbf A$. In this paper, we assume that the fading coefficients $h_{ij}$ are independent Rayleigh random variables with $h_{ij} \sim \mathrm{Rayleigh}(\nu_{j})$ for $i=0, \cdots, M-1$ and $j=0, 1, \cdots, N-1$ where $\mathbb E\{h_{ij}^2\} = 2\nu_{j}^2$ is assumed to be different in general for the channels between different sensors and the fusion center. The  goal is to obtain recovery guarantees of  a given $\mathbf x$ based on (\ref{l_1}) under the above discussed  statistics for $\mathbf A$ and $\mathbf H$.


First, it is important to observe the statistical properties of the matrix $\mathbf B$.
\subsection{Statistics of $\mathbf B$}
The $(i,j)$-th  element of $\mathbf B$ is given by,
\begin{eqnarray}
\mathbf B_{ij}= \left\{
\begin{array}{cc}
h_{ij}a_{ij} & \mathrm{with}~\mathrm{prob}~ \gamma_{j}\\
0 &\mathrm{ with}~\mathrm{prob}~ 1-  \gamma_{j}\label{B_ij}
\end{array}\right.
\end{eqnarray}
for $i=0,\cdots, M-1$ and $j=0,\cdots, N-1$.
Since the elements of $\mathbf A$ and $\mathbf H$ are assumed to be independent, the elements of $\mathbf B$ are also independent (but not identical in general).
\begin{proposition}\label{prop_1}
Let $w=h_{ij}a_{ij}$ where  $a_{ij} \sim \mathcal N(0, \sigma_a^2)$ and $h_{ij} \sim \mathrm {Rayleigh}(\nu_h)$.  Then the pdf of $w$ is doubly exponential (Laplacian) which is given by,
\begin{eqnarray*}
f(w) = \frac{1}{2\bar{\sigma}}  e^{- {\frac{1}{\bar{\sigma}}} |w|}.
\end{eqnarray*}
where $\bar{\sigma} = \nu_h\sigma_a$.
\end{proposition}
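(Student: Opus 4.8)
The plan is to treat $w = h_{ij}a_{ij}$ as a product of two independent random variables and obtain its density directly. Since $a_{ij}\sim\mathcal N(0,\sigma_a^2)$ and $h_{ij}\sim\mathrm{Rayleigh}(\nu_h)$ is supported on $[0,\infty)$, the standard product-density formula gives
\begin{eqnarray*}
f(w) = \int_0^\infty \frac{1}{h}\, f_h(h)\, f_a(w/h)\, dh,
\end{eqnarray*}
where $f_h(h)=\frac{h}{\nu_h^2}e^{-h^2/(2\nu_h^2)}$ and $f_a$ are the Rayleigh and Gaussian densities. Substituting these, the factor of $h$ in the Rayleigh density cancels the $1/h$ Jacobian, leaving a prefactor $\frac{1}{\nu_h^2\sqrt{2\pi}\,\sigma_a}$ multiplying an integral of the form $\int_0^\infty e^{-\alpha h^2-\beta/h^2}\,dh$ with $\alpha=1/(2\nu_h^2)$ and $\beta=w^2/(2\sigma_a^2)$.

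First I would reduce the whole computation to this single integral; the main obstacle is evaluating it, since the $\beta/h^2$ term makes it non-Gaussian. I would invoke the classical identity $\int_0^\infty e^{-\alpha h^2-\beta/h^2}\,dh=\tfrac12\sqrt{\pi/\alpha}\,e^{-2\sqrt{\alpha\beta}}$ for $\alpha,\beta>0$, which follows from a substitution that symmetrizes the exponent. With $2\sqrt{\alpha\beta}=|w|/(\nu_h\sigma_a)$ and $\sqrt{\pi/\alpha}=\nu_h\sqrt{2\pi}$, collecting these against the prefactor collapses all constants and yields exactly $\frac{1}{2\bar\sigma}e^{-|w|/\bar\sigma}$ with $\bar\sigma=\nu_h\sigma_a$.

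As a cleaner alternative that sidesteps the delicate integral, I would instead compute the characteristic function by conditioning on $h$. Because $a_{ij}$ is zero-mean Gaussian, $\mathbb E\{e^{\imath t h a_{ij}}\mid h\}=e^{-t^2h^2\sigma_a^2/2}$, and since $h\sim\mathrm{Rayleigh}(\nu_h)$ implies that $h^2$ is exponentially distributed with mean $2\nu_h^2$, averaging over $h$ reduces to an elementary exponential integral giving $\phi_w(t)=1/(1+\bar\sigma^2 t^2)$. This is precisely the characteristic function of a Laplacian with scale $\bar\sigma$, so uniqueness of characteristic functions completes the proof. I expect this characteristic-function route to be the least error-prone, the only subtlety being the observation that the square of a Rayleigh variable is exponential, which makes the final integral trivial.
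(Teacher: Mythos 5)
Your first route is exactly the paper's proof: Appendix A writes the same product-density integral, cancels the Rayleigh factor $h$ against the $1/h$ Jacobian, and invokes the same identity $\int_0^{\infty} e^{-a x^2 - b/x^2}\,dx = \tfrac12\sqrt{\pi/a}\,e^{-2\sqrt{ab}}$; your constant bookkeeping ($\sqrt{\pi/\alpha}=\nu_h\sqrt{2\pi}$, $2\sqrt{\alpha\beta}=|w|/(\nu_h\sigma_a)$) is correct and lands on $\tfrac{1}{2\bar\sigma}e^{-|w|/\bar\sigma}$. Your characteristic-function alternative is a genuinely different and arguably cleaner argument that the paper does not use: conditioning on $h$ gives $\mathbb E\{e^{\imath t h a}\mid h\}=e^{-t^2h^2\sigma_a^2/2}$, the key observation that $h^2$ is exponential with mean $2\nu_h^2$ turns the outer average into an elementary integral yielding $\phi_w(t)=1/(1+\bar\sigma^2t^2)$, and uniqueness of characteristic functions identifies the Laplacian. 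What the paper's (and your first) approach buys is the explicit density with no appeal to inversion/uniqueness, at the cost of needing the nontrivial Gaussian-type integral identity; what your second approach buys is avoiding that identity entirely, and it also exposes the structural fact that $w$ is a Gaussian scale mixture with exponential mixing variance, which is the conceptual reason the product is Laplacian. Both are correct.
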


\begin{proof}
See Appendix A.
\end{proof}

Taking  $\bar{\sigma}_j = \sigma_{j} \nu_{j}$, the pdf of $u= \mathbf B_{ij}$ in (\ref{B_ij}) can be written as
\begin{eqnarray}
f(u) = \gamma_{j} \frac{1}{2 \bar{\sigma}_j } e^{-\frac{|u|}{\bar{\sigma}_j}} + (1-\gamma_{j}) \delta(u)\label{f_u}
\end{eqnarray}
where  $\delta(.)$ is the Dirac delta function. It can be easily proved  that
\begin{eqnarray*}
Pr(|u| > t) = \left\{
\begin{array}{ccc}
1~& for~ t=0\\
\gamma_{j} e^{-\frac{t}{\bar{\sigma}_j}}~ & for~ t>0
\end{array}\right. .
\end{eqnarray*}
Thus, we can find a constant $K_1 > 0$ such that,
\begin{eqnarray*}
Pr(|u| > t) \leq e^{1-t/K_1}
\end{eqnarray*}
for all $t\geq 0$. Thus, $u$ is a sub-exponential random variable \cite{Vershynin_book1}. In other words, the elements of $\mathbf B$ are independent (but not identical in general)  sub-exponential random variables. While there is a  substantial amount of work in the literature that addresses the problem of  sparse signal recovery with Gaussian and sub-Gaussian random matrices, very little is known  with   random matrices with sub-exponential (or heavy tailed) elements. In the following, we obtain nonuniform recovery guarantees for (\ref{l_1}) when the elements of $\mathbf B$ have a pdf as given in (\ref{f_u}) and simplify the results when the matrix $\mathbf B$ is isotropic. We further provide recovery guarantees for  general nonidentical sub-exponential random matrices.

\section{Nonuniform  Recovery Guarantees with Independent Channel Fading}\label{recovery}
We present  the following statistical results which are helpful in deriving recovery conditions.

\begin{definition}[Isotropic random vectors \cite{candes_TIT1}]
A random vector $\mathbf x \in \mathbb R^N$ is called isotropic if $\mathbb E\{\mathbf x \mathbf x^T\} = \mathbf I_N$.
\end{definition}
The row vectors of the matrix  $\mathbf B$ are in general not isotropic. However, the column vectors of $\mathbf B$ with appropriate normalization  become isotropic. In  the  special case where $\gamma_j =\gamma$ and $\bar\sigma_j =  \bar\sigma$, both row and columns vectors of the normalized matrix $\frac{1}{\sqrt{2 \gamma \bar\sigma^2}} \mathbf B$  are isotropic.

\begin{proposition}[mgf $u$]\label{prop_mgf_sub_exp}
Let $u$ be a random variable with pdf $f(u)$ where $f(u)$ is given in (\ref{f_u}). Then for $|t| \leq \frac{1}{\tilde{\eta}_{\max} }$ where $\tilde{\eta}_{\max}  = \underset{j}{\max}\{\bar\sigma_j\}$, we have
\begin{eqnarray*}
\mathbb E \{e^{tu}\} \leq  e^{\eta_{\max} t^2}
\end{eqnarray*}
where $\eta_{\max} = \underset{j}{\max}\{\gamma_j \bar\sigma_j^2\}$.
\end{proposition}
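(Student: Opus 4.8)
The plan is to compute the moment generating function of $u$ in closed form directly from the density (\ref{f_u}) and then dominate the resulting rational expression by the claimed Gaussian-type bound. Because $u$ has an explicit mixture density rather than merely a tail bound, I would avoid the generic sub-exponential MGF lemma of \cite{Vershynin_book1} and instead exploit the exact form. First I would observe that the density (\ref{f_u}) is an affine combination of a zero-mean Laplace density with scale $\bar\sigma_j$ and a point mass at the origin, so by linearity of the expectation $\mathbb E\{e^{tu}\} = (1-\gamma_j) + \gamma_j\, \mathbb E\{e^{tL}\}$, where $L$ is Laplace with scale parameter $\bar\sigma_j$.

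Next I would substitute the standard Laplace MGF, $\mathbb E\{e^{tL}\} = (1 - \bar\sigma_j^2 t^2)^{-1}$, which is finite precisely when $|t| < 1/\bar\sigma_j$. Here the hypothesis $|t| \le 1/\tilde\eta_{\max}$ with $\tilde\eta_{\max} = \max_j \bar\sigma_j$ guarantees $|t| \le 1/\bar\sigma_j$ for every $j$, so the generating function exists over the whole admissible range. Collecting terms yields the exact identity $\mathbb E\{e^{tu}\} = 1 + \dfrac{\gamma_j \bar\sigma_j^2 t^2}{1 - \bar\sigma_j^2 t^2}$, which already exhibits the even, $t^2$-type growth expected of a sub-exponential variable.

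The final and main step is to bound this rational function by $e^{\eta_{\max} t^2}$. I would apply the elementary inequality $1 + x \le e^x$ to reduce the goal to showing that the exponent $\dfrac{\gamma_j \bar\sigma_j^2 t^2}{1 - \bar\sigma_j^2 t^2}$ does not exceed $\eta_{\max} t^2$, where $\eta_{\max} = \max_j \gamma_j \bar\sigma_j^2 \ge \gamma_j \bar\sigma_j^2$. Since $t^2$ factors out, this amounts to controlling $\dfrac{\gamma_j \bar\sigma_j^2}{1 - \bar\sigma_j^2 t^2}$, using the range constraint on $t$ to keep the denominator bounded away from zero and comparing $\gamma_j \bar\sigma_j^2$ with its maximum $\eta_{\max}$.

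I expect the denominator control in this last step to be the crux: the Laplace MGF diverges as $\bar\sigma_j^2 t^2 \to 1$, so the estimate is only meaningful strictly inside the admissible interval, and the comparison constant must be tracked carefully to land on exactly $\eta_{\max}$ rather than a larger multiple of it. I would therefore verify that the worst case over the index $j$ in the numerator $\gamma_j\bar\sigma_j^2$ and over the admissible values of $t$ is simultaneously absorbed into $\eta_{\max}$, treating the range condition as the requirement that $1-\bar\sigma_j^2 t^2$ remain positive and suitably bounded below.
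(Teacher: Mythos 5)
Your computation up to the exact identity $\mathbb E\{e^{tu}\} = 1 + \frac{\gamma_j\bar\sigma_j^2 t^2}{1-\bar\sigma_j^2 t^2}$ coincides with the paper's Appendix B (the paper integrates the density (\ref{f_u}) directly rather than invoking the mixture decomposition and the standard Laplace MGF, but it lands on the same expression). The gap is in your final step, and it is not a technicality you can "verify" away: reducing via $1+x\le e^x$ to the claim $\frac{\gamma_j\bar\sigma_j^2}{1-\bar\sigma_j^2 t^2}\le \eta_{\max}$ cannot work on the stated range, because for the index $j$ attaining the maximum we have $\gamma_j\bar\sigma_j^2=\eta_{\max}$, and the required inequality then forces $1-\bar\sigma_j^2t^2\ge 1$, i.e.\ $t=0$. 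In fact the proposition with these exact constants is false on the stated range: take $\gamma_j=1$ and all $\bar\sigma_j=\bar\sigma$, so that $\tilde\eta_{\max}=\bar\sigma$, $\eta_{\max}=\bar\sigma^2$, and $\mathbb E\{e^{tu}\}=(1-\bar\sigma^2t^2)^{-1}$; since $1-x\le e^{-x}$ for all $x$ with equality only at $x=0$, one gets $(1-\bar\sigma^2t^2)^{-1}\ge e^{\eta_{\max}t^2}$ for every admissible $t$, with strict inequality for $t\ne 0$ and divergence at $|t|=1/\tilde\eta_{\max}$. So the direction of the claimed bound is wrong everywhere except $t=0$ in this configuration.

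It is worth seeing how the paper's own proof gets past this point: it expands $(1-\bar\sigma_j^2t^2)^{-1}$ as a geometric series and then truncates to first order with an ``$\approx$'' sign, writing $\mathbb E\{e^{tu}\}\approx 1+\gamma_j\bar\sigma_j^2t^2\le e^{\gamma_j\bar\sigma_j^2 t^2}$ --- precisely the step your more honest bookkeeping exposes as an inequality in the wrong direction. The standard repair, which both your argument and the paper's need, is to give up an absolute constant: restrict to $|t|\le \frac{1}{2\tilde\eta_{\max}}$ (so $\bar\sigma_j^2t^2\le 1/4$ and the denominator is at least $3/4$), which yields $\mathbb E\{e^{tu}\}\le 1+\frac{4}{3}\gamma_j\bar\sigma_j^2t^2\le e^{2\eta_{\max}t^2}$; equivalently, shrink the admissible range and/or inflate the constant in the exponent, exactly as in the generic sub-exponential MGF lemma of \cite{Vershynin_book1} that you chose to avoid. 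This costs nothing downstream: Proposition \ref{prop_bern_inequ} uses the bound only through an optimization over the free parameter $\lambda$, so the constants in (\ref{bern_noniid}) and in all subsequent scaling results change only by absolute factors.
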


\begin{proof}
See Appendix B.
\end{proof}

Next, we provide a Bernstein-type inequality to bound the weighted sum of independent but nonidentical  random variables with pdf as given in (\ref{f_u}). It is noted that, a similar bound is derived in \cite{Vershynin_book1} for general sub-exponential random variables which are characterized by the sub-exponential norm. The following results are the same as  those in Proposition $5.16$ of  \cite{Vershynin_book1} only when $\gamma_j = 1$ for all $j$.

\begin{proposition}[Bernstein-type inequality ]\label{prop_bern_inequ}
Let $u_0, \cdots, u_{N-1}$ be $N$ independent random variables where the pdf of $u_j$ is as given in (\ref{f_u}) for $j=0,\cdots, N-1$. Then for every $\boldsymbol\alpha = (\alpha_0, \cdots, \alpha_{N-1})\in \mathbb R^N $ and every $t > 0$, we have,
\begin{eqnarray}
Pr\left(|\sum_{i=0}^{N-1} \alpha_i u_i| \geq t\right) \leq 2 e^{- \min\left(\frac{t^2}{4\eta_{\max} ||\boldsymbol\alpha||_2^2}, \frac{t}{2 \tilde\eta_{\max}||\boldsymbol\alpha||_{\infty}}\right)}\label{bern_noniid}
\end{eqnarray}
where $\eta_{\max} = \underset{j}{\max} \{\gamma_j \bar\sigma_j^2\}$ and $\tilde\eta_{\max} = \underset{j}{\max}\{\bar\sigma_j\}$ as defined before.
\end{proposition}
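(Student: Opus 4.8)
The plan is to apply the standard exponential-moment (Chernoff) method, with Proposition \ref{prop_mgf_sub_exp} supplying the per-term moment generating function bound. I first treat the one-sided tail. For any $\lambda>0$, Markov's inequality applied to $e^{\lambda\sum_i\alpha_i u_i}$ gives
$$Pr\left(\sum_{i=0}^{N-1}\alpha_i u_i \geq t\right)\leq e^{-\lambda t}\,\mathbb E\left\{e^{\lambda\sum_i\alpha_i u_i}\right\}.$$
Because the $u_i$ are independent, the expectation factorizes into $\prod_i\mathbb E\{e^{\lambda\alpha_i u_i}\}$. Proposition \ref{prop_mgf_sub_exp} bounds each factor by $e^{\eta_{\max}\lambda^2\alpha_i^2}$, but only in the validity range $|\lambda\alpha_i|\leq 1/\tilde\eta_{\max}$; requiring this simultaneously for all $i$ forces $\lambda\leq 1/(\tilde\eta_{\max}||\boldsymbol\alpha||_{\infty})$. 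On that range the product is at most $e^{\eta_{\max}\lambda^2||\boldsymbol\alpha||_2^2}$, so
$$Pr\left(\sum_i\alpha_i u_i \geq t\right)\leq e^{-\lambda t+\eta_{\max}\lambda^2||\boldsymbol\alpha||_2^2}.$$

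The core of the argument is minimizing the exponent over $\lambda\in(0,\,1/(\tilde\eta_{\max}||\boldsymbol\alpha||_{\infty})]$. The unconstrained minimizer of $-\lambda t+\eta_{\max}\lambda^2||\boldsymbol\alpha||_2^2$ is $\lambda^*=t/(2\eta_{\max}||\boldsymbol\alpha||_2^2)$, attaining exponent $-t^2/(4\eta_{\max}||\boldsymbol\alpha||_2^2)$. I would split into two cases according to whether $\lambda^*$ falls inside the admissible interval. If $\lambda^*\leq 1/(\tilde\eta_{\max}||\boldsymbol\alpha||_{\infty})$, choosing $\lambda=\lambda^*$ directly yields the quadratic (sub-Gaussian) exponent. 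Otherwise $\lambda^*$ exceeds the endpoint, which is equivalent to $\eta_{\max}||\boldsymbol\alpha||_2^2< t\,\tilde\eta_{\max}||\boldsymbol\alpha||_{\infty}/2$; here I would set $\lambda$ to the endpoint $1/(\tilde\eta_{\max}||\boldsymbol\alpha||_{\infty})$ and use this inequality to bound the quadratic term by $t/(2\tilde\eta_{\max}||\boldsymbol\alpha||_{\infty})$, so that the linear (sub-exponential) exponent $-t/(2\tilde\eta_{\max}||\boldsymbol\alpha||_{\infty})$ survives. Taking the smaller of the two exponents across the cases gives the single one-sided bound with the $\min$ inside the exponential.

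Finally, I would observe that the pdf in (\ref{f_u}) is symmetric about the origin, so $-u_i$ has the same law as $u_i$, and the identical bound holds for $Pr(\sum_i\alpha_i u_i\leq -t)$; a union bound over the two symmetric tails supplies the factor $2$ and completes the proof. I expect the only delicate step to be the constrained optimization over $\lambda$: the restriction $\lambda\leq 1/(\tilde\eta_{\max}||\boldsymbol\alpha||_{\infty})$ inherited from the validity range of Proposition \ref{prop_mgf_sub_exp} is exactly what produces the two-regime behavior, and the boundary case must be checked carefully to confirm that the linear exponent genuinely dominates.
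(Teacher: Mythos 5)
Your proposal is correct and follows essentially the same route as the paper: an exponential Markov (Chernoff) bound, factorization by independence, the per-term mgf bound from Proposition \ref{prop_mgf_sub_exp} valid for $|\lambda|\leq 1/(\tilde\eta_{\max}\|\boldsymbol\alpha\|_{\infty})$, a two-case constrained optimization over $\lambda$, and symmetry plus a union bound for the factor $2$. The only difference is that you write out the optimization over $\lambda$ explicitly, whereas the paper delegates that step to the proof of Proposition 5.16 in Vershynin's notes.
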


\begin{proof}
See Appendix C.
\end{proof}

\subsection{Nonuniform recovery guarantees in the presence of independent  fading channels}
In the following, we present our main results on  recovery of a given  $\mathbf x$ based on (\ref{l_1}). Before that,  we introduce additional  notation.  Let $\mathcal U = \{0,1,\cdots, N-1\}$ and $\mathcal S:= \mathrm{supp}(\mathbf x)= \{i: \mathbf x(i) \neq 0, i=0, 1,\cdots, N-1\}$ where $\mathbf x(i)$ is the $i$-th element of $\mathbf x$. For a $k$-sparse vector $\mathbf x$,  we have $|\mathcal S|=k$.  Further, by $\mathbf B_S$, we denote the sub-matrix  of $\mathbf B$ that contains columns of $\mathbf B$ corresponding to the indices in $\mathcal S$ and $\mathbf x^S$ is a $k \times 1$ vector which contains the elements of $\mathbf x$ corresponding to indices in $\mathcal S$. Further, let $\bar{\boldsymbol\sigma} = [\bar{\sigma}_{0}, \cdots, \bar{\sigma}_{N-1}]^T$ and $\boldsymbol\nu = [\nu_0, \cdots, \nu_{N-1}]^T$.

To ensure recovery of a given signal $\mathbf x$ via (\ref{l_1}), it is sufficient to show that \cite{Ayaz_Tech1, Fuchs_TIT1,Rauhut_J1},
\begin{eqnarray*}
|\langle (\mathbf B_S)^{\dagger} \mathbf b_l, \mathrm{sgn}(\mathbf x^S)\rangle| < 1 ~ \mathrm{for}~\mathrm{all}~ l\in \mathcal U \setminus \mathcal S
\end{eqnarray*}
where $(\mathbf B_S)^{\dagger} = (\mathbf B_S^T \mathbf B_S)^{-1} \mathbf B_S^T$ is the Moore-Penrose pseudo inverse of $\mathbf B_S$ and $\mathrm{sgn}(\mathbf x)$ is the sign vector having entries
\begin{eqnarray*}
\mathrm{sgn}(\mathbf x)_j := \left\{
\begin{array}{ccc}
\frac{\mathbf x_j}{|\mathbf x_j|} ~ & \mathrm{if} ~ \mathbf x(j) \neq 0,\\
0, ~ & \mathrm{otherwise},
\end{array}\right.
\mathrm{for} ~\mathrm{all} ~j\in \mathcal U.
\end{eqnarray*}

\begin{theorem}\label{theorem_1}
Let $\mathcal S \subset \mathcal U$ with $|\mathcal S| = k$. Further, let the elements of $\mathbf B$ be given  as in (\ref{B_ij}),  $\eta_{\max}(\boldsymbol\gamma, \bar{\boldsymbol\sigma}) = \underset{0\leq j\leq N-1}{\max} (\gamma_j \bar\sigma_j^2)$, $\eta_{\min}(\boldsymbol\gamma, \bar{\boldsymbol\sigma}) = \underset{0\leq j\leq N-1}{\min} (\gamma_j \bar\sigma_j^2)$ and $\tilde\eta_{\max} ( \bar{\boldsymbol\sigma})= \underset{0\leq j\leq N-1}{\max}\{\bar\sigma_j \} $.   Define $R$ such that
\begin{eqnarray*}
 \frac{||\mathbf b_S||_{\infty}}{||\mathbf b_S||_2} \leq R
  \end{eqnarray*}
  almost surely where $0 < R \leq 1$ and $\mathbf b_S = (\mathbf B_{\mathcal S}^{\dagger})^T \mathrm{sgn}(\mathbf x^S) $.  Then, for $0< \epsilon, \epsilon' < 1$, $\mathbf x$ is the unique solution  to (\ref{l_1}) with probability exceeding $1-\max(\epsilon,\epsilon')$ if the following condition is  satisfied:
\begin{eqnarray}
M \geq \max\{M_1, M_2\} \label{M_noniid}
\end{eqnarray}
where $M_1$ and $M_2$ are given in (\ref{M_1}) and (\ref{M_2}) respectively,
\begin{figure*}
\begin{eqnarray}
M_1 &=& \frac{\eta_{\max}(\boldsymbol\gamma, \bar{\boldsymbol\sigma})}{\eta_{\min}(\boldsymbol\gamma, \bar{\boldsymbol\sigma})} 2k \left(\sqrt{\frac{\eta_{\max}(\boldsymbol\gamma, \bar{\boldsymbol\sigma})}{\eta_{\min}(\boldsymbol\gamma, \bar{\boldsymbol\sigma})}}\sqrt{\log(2N/\epsilon)} + \sqrt{\frac{\log(k/\epsilon')}{2c'}}\right)^2\label{M_1}\\
M_2& =& \frac{\eta_{\max}(\boldsymbol\gamma, \bar{\boldsymbol\sigma})}{\eta_{\min}(\boldsymbol\gamma, \bar{\boldsymbol\sigma})} 2k \left(\frac{\tilde\eta_{\max}( \bar{\boldsymbol\sigma})}{\sqrt{\eta_{\min}(\boldsymbol\gamma, \bar{\boldsymbol\sigma})}}R{\log(2N/\epsilon)} + \sqrt{\frac{\log(k/\epsilon')}{2c'}}\right)^2\label{M_2}
 \end{eqnarray}
 \end{figure*}
and  $c'$ is an absolute constant.
\end{theorem}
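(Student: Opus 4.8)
The plan is to establish the claim through the standard dual-certificate (primal--dual witness) route for $\ell_1$ recovery quoted just above the theorem: it suffices to show that $|\langle(\mathbf B_{\mathcal S})^{\dagger}\mathbf b_l,\mathrm{sgn}(\mathbf x^S)\rangle|<1$ for every $l\in\mathcal U\setminus\mathcal S$. Writing $\mathbf b_{\mathcal S}=(\mathbf B_{\mathcal S}^{\dagger})^T\mathrm{sgn}(\mathbf x^S)=\mathbf B_{\mathcal S}(\mathbf B_{\mathcal S}^T\mathbf B_{\mathcal S})^{-1}\mathrm{sgn}(\mathbf x^S)$, this inner product equals $\langle\mathbf b_l,\mathbf b_{\mathcal S}\rangle=\sum_{i=0}^{M-1}\mathbf B_{il}\,\mathbf b_{\mathcal S}(i)$. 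The key structural observation is that, because the columns of $\mathbf B$ are independent, for $l\notin\mathcal S$ the column $\mathbf b_l$ is independent of $\mathbf B_{\mathcal S}$ and hence of $\mathbf b_{\mathcal S}$. I would therefore condition on $\mathbf B_{\mathcal S}$ and treat $\langle\mathbf b_l,\mathbf b_{\mathcal S}\rangle$ as a weighted sum of the $M$ independent sub-exponential entries of $\mathbf b_l$ with the now-fixed weight vector $\boldsymbol\alpha=\mathbf b_{\mathcal S}$.

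First I would apply the Bernstein-type inequality of Proposition \ref{prop_bern_inequ} with $t=1$ to this conditional sum, which gives $\mathrm{Pr}(|\langle\mathbf b_l,\mathbf b_{\mathcal S}\rangle|\ge1\mid\mathbf B_{\mathcal S})\le 2\exp\!\big(-\min(\frac{1}{4\eta_{\max}\|\mathbf b_{\mathcal S}\|_2^2},\frac{1}{2\tilde\eta_{\max}\|\mathbf b_{\mathcal S}\|_\infty})\big)$. Taking a union bound over the at most $N$ indices $l\in\mathcal U\setminus\mathcal S$, the failure probability is below $\epsilon$ provided each of the two exponents is at least $\log(2N/\epsilon)$. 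This yields two deterministic requirements, one per branch: $\|\mathbf b_{\mathcal S}\|_2^2\le(4\eta_{\max}\log(2N/\epsilon))^{-1}$ from the quadratic (sub-Gaussian) branch and $\|\mathbf b_{\mathcal S}\|_\infty\le(2\tilde\eta_{\max}\log(2N/\epsilon))^{-1}$ from the linear (heavy-tail) branch; these are what will translate into $M_1$ and $M_2$, respectively.

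Next I would convert these norm requirements into a condition on $M$. Since $\|\mathbf b_{\mathcal S}\|_2^2=\mathrm{sgn}(\mathbf x^S)^T(\mathbf B_{\mathcal S}^T\mathbf B_{\mathcal S})^{-1}\mathrm{sgn}(\mathbf x^S)\le k/s_{\min}(\mathbf B_{\mathcal S})^2$ and, by the definition of $R$, $\|\mathbf b_{\mathcal S}\|_\infty\le R\|\mathbf b_{\mathcal S}\|_2$, both requirements reduce to a high-probability lower bound on the smallest singular value $s_{\min}(\mathbf B_{\mathcal S})$. To obtain it I would normalize the columns, writing $\mathbf B_{\mathcal S}^T\mathbf B_{\mathcal S}=\boldsymbol\Sigma^{1/2}(\tilde{\mathbf B}_{\mathcal S}^T\tilde{\mathbf B}_{\mathcal S})\boldsymbol\Sigma^{1/2}$ with $\boldsymbol\Sigma=\mathrm{diag}(2\gamma_j\bar\sigma_j^2)_{j\in\mathcal S}$, so that $\tilde{\mathbf B}_{\mathcal S}$ has isotropic rows and $\lambda_{\min}(\boldsymbol\Sigma)=2\eta_{\min}$, $\lambda_{\max}(\boldsymbol\Sigma)=2\eta_{\max}$. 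A concentration bound for the extreme singular values of a matrix with independent (but nonidentical) sub-exponential rows---obtained by controlling $\|\mathbf B_{\mathcal S}\mathbf z\|_2^2=\sum_i\xi_i^2$ for $\mathbf z$ ranging over a net of $S^{k-1}$ via Proposition \ref{prop_bern_inequ} applied to the sub-exponential variables $\xi_i$---then yields, with probability at least $1-\epsilon'$, a bound of the form $s_{\min}(\mathbf B_{\mathcal S})\ge\sqrt{2\eta_{\min}}\big(\sqrt{M}-c_0\sqrt{k\,\log(k/\epsilon')/(2c')}\,\big)$. Substituting this into the two requirements above and solving for $M$ produces $M_1$ and $M_2$; the extra $\eta_{\max}/\eta_{\min}$ factors appear precisely because the mean of $\|\mathbf B_{\mathcal S}\mathbf z\|_2^2$ in the worst direction is governed by $\eta_{\min}$ while the fluctuation scale of the $\xi_i$ is governed by $\eta_{\max}$, and the $R$, $\tilde\eta_{\max}$ dependence survives only in the linear branch $M_2$.

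The hard part will be the high-probability lower bound on $s_{\min}(\mathbf B_{\mathcal S})$: the columns of $\mathbf B_{\mathcal S}$ are heavy-tailed (only sub-exponential, not sub-Gaussian) and have nonidentical variances, so the classical sub-Gaussian singular-value estimates do not apply and must be rederived using the Bernstein bound of Proposition \ref{prop_bern_inequ} together with a covering argument, carefully tracking the nonidentical scales. A secondary technical point is that the conditioning on $\mathbf B_{\mathcal S}$ must be handled cleanly: the conditional Bernstein bound for $\langle\mathbf b_l,\mathbf b_{\mathcal S}\rangle$ should be invoked with the realized $\|\mathbf b_{\mathcal S}\|_2,\|\mathbf b_{\mathcal S}\|_\infty$ on the event where $s_{\min}(\mathbf B_{\mathcal S})$ is large, so that the Bernstein union bound (failing with probability at most $\epsilon$) and the singular-value event (failing with probability at most $\epsilon'$) combine into the single success statement of the claimed form.
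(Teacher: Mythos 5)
Your proposal follows essentially the same route as the paper's proof in Appendix D: the dual-certificate condition, the Bernstein bound of Proposition \ref{prop_bern_inequ} applied conditionally on $\mathbf B_{\mathcal S}$ with a union bound over $l\notin\mathcal S$, reduction of the two resulting norm conditions on $\mathbf b_{\mathcal S}$ (via $\|\mathbf b_{\mathcal S}\|_\infty\leq R\|\mathbf b_{\mathcal S}\|_2$ and $\|\mathbf b_{\mathcal S}\|_2\leq\sqrt{k}/s_{\min}(\mathbf B_{\mathcal S})$) to a lower bound on the smallest singular value, and finally a concentration step that produces $M_1$ and $M_2$ from the two branches of $\beta$. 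The only divergence is local and inessential: where you propose to rederive the $s_{\min}(\mathbf B_{\mathcal S})$ bound by a net-plus-Bernstein argument tracking the nonidentical sub-exponential scales, the paper instead rescales by a single scalar $\Gamma_{\mathcal S}$ and invokes Vershynin's theorem for matrices with independent heavy-tailed rows of almost-surely bounded norm, recovering the $\eta_{\max}/\eta_{\min}$ factors from bounds on $\Gamma_{\mathcal S}$ and $\|\Sigma_{B_{\mathcal S}}\|$.
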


\begin{proof}
See Appendix D.
\end{proof}

From Theorem \ref{theorem_1}, it is observed that the ratio between peak and total energy of $\mathbf b_S$, $R$,  plays an important  role in deciding the minimum number of MAC transmissions needed to recover $\mathbf x$ with a given support $\mathcal S$. As shown in Appendix E, when the elements of $\mathbf B$ are distributed according to (\ref{f_u}) we can take
\begin{eqnarray*}
R = \mathcal O\left(\sqrt{\frac{k}{2M}}\right).
\end{eqnarray*}
Then, the dominant part of $M_2$ in (\ref{M_2}) scales as
\begin{eqnarray}
\mathcal O\left(\sqrt{\frac{\eta_{\max}(\boldsymbol\gamma, \bar{\boldsymbol\sigma})\tilde\eta^2_{\max}(\bar{\boldsymbol\sigma})}{\eta^2_{\min}(\boldsymbol\gamma, \bar{\boldsymbol\sigma})}} k {\log(2N / \epsilon)}\right) \label{M_2_dominant}
 \end{eqnarray} while the dominant term of $M_1$ scales as
 \begin{eqnarray}
 \mathcal O\left(\frac{\eta^2_{\max}(\boldsymbol\gamma, \bar{\boldsymbol\sigma})}{\eta^2_{\min}(\boldsymbol\gamma, \bar{\boldsymbol\sigma})} k {\log(2N/\epsilon)}\right).\label{M_1_dominant}
 \end{eqnarray} Thus, when
\begin{eqnarray}
\underset{j}{\min} \{\gamma_j \bar\sigma_j^2\} \leq \underset{j}{\max} \{\gamma_j \bar\sigma_j^2 \} \frac{\sqrt{\underset{j}{\max} \{\gamma_j \bar\sigma_j^2\}}}{\underset{j}{\max}\{ \bar\sigma_j\}}\label{M_1DominatesM_2}
\end{eqnarray}
$M_1$ dominates $M_2$ (and vice versa).

\subsection{Probabilities of transmission and channel fading statistics }
From (\ref{M_2_dominant}) and (\ref{M_1_dominant}), it is seen that the number of MAC  transmissions  required for reliable signal recovery depends  on the probabilities of transmission  $\boldsymbol\gamma$, and the quality of the fading channels  $\boldsymbol\nu$.
Since the designer has the control on $\boldsymbol\gamma$,  we discuss how to  design   $\boldsymbol\gamma$ as a function of $\boldsymbol\nu$  so that $M_1$ and $M_2$ become minimum with respect to $\boldsymbol\gamma$.  Since $\eta_{\max} \geq \eta_{\min}$, for $M_1$ and $M_2$ to be minimum,
it is desired to have the gap between $\eta_{\max} $ and $\eta_{\min}$ minimum. When $\eta_{\max} = \eta_{\min}$, it is easily seen that $M_2$ dominates $M_1$, thus, $M= M_2$.   Let us assume that  the maximum available energy at each node for given transmission is the same so that $\sigma_j^2 = \sigma_a^2$ and $E_j=E$   for all $j$.  Then  the probability of transmission at each node should satisfy the following condition:
\begin{eqnarray}
\gamma_j \leq \min \left\{ 1, \frac{ E}{ \sigma_a^2}\right\} = \bar\gamma\label{gamma_const}
\end{eqnarray}
for $j=0, \cdots, N-1$.

When $\sigma_j^2 = \sigma_a^2$ for all $j$, the term that depends on $\boldsymbol\gamma$ in $M_2$ in (\ref{M_2_dominant}) can be expressed as,
\begin{eqnarray}
\Psi(\boldsymbol\gamma) = \sqrt{\frac{\eta_{\max}(\boldsymbol\gamma, \bar{\boldsymbol\sigma})\tilde\eta^2_{\max}(\bar{\boldsymbol\sigma})}{\eta^2_{\min}(\boldsymbol\gamma, \bar{\boldsymbol\sigma})}} =\sqrt{ \frac{\underset{j}{\max} \{\gamma_j \nu_j^2\} \underset{j}{\max} \{ \nu_j^2\}}{\left(\underset{j}{\min} \{\gamma_j \nu_j^2\}\right)^2}}.\label{Psi_gamma}
\end{eqnarray}
Since $\gamma_j \leq 1$ for all $j$, we have $\Psi(\boldsymbol\gamma) \geq 1$ and the { equality (of $\Psi(\boldsymbol\gamma) \geq 1$ where $\Psi(\boldsymbol\gamma) $ is given in (\ref{Psi_gamma}))} holds only if $\gamma_j =1$ for all $j$ and the channels are identical so that $\nu_0^2=\nu_1^2= \cdots, \nu_{N-1}^2$. The goal is to find $\boldsymbol\gamma$ so that (\ref{Psi_gamma}) is minimized under the constraint (\ref{gamma_const}). It is noted that  $\Psi(\boldsymbol\gamma)$ is minimum with respect to $\boldsymbol\gamma$ when $\underset{j}{\max} \{\gamma_j \nu_j^2\}  =\underset{j}{\min} \{\gamma_j \nu_j^2\}$.  Without loss of generality, we sort   $\nu_j$'s in ascending order so that $\nu_1 \leq \nu_2 \leq \cdots, \nu_N$.  To achieve $\underset{j}{\max} \{\gamma_j \nu_j^2\}  =\underset{j}{\min} \{\gamma_j \nu_j^2\}$, we select $\boldsymbol\gamma$ so that the largest $\gamma_j$  is assigned to the node indexed by $0$ while the smallest $\gamma_j$ is assigned to the node indexed by $N-1$ for  $j=0,\cdots, N-1$. More specifically, let $\gamma_j = \frac{d_0}{\nu_j}$ for $j=0,\cdots, N-1$ where $d_0$ is a constant. This leads to
\begin{eqnarray}
\Psi(\boldsymbol\gamma) = \sqrt{\frac{\nu_{N-1}^2}{d_0}}. \label{psi_gamma_2}
\end{eqnarray}
With the  constraint for $\gamma_j$ in (\ref{gamma_const}), we further have
\begin{eqnarray*}
d_0 \leq \bar\gamma \nu_0^2.
\end{eqnarray*}
Then,   $\Psi(\boldsymbol\gamma)$ in (\ref{psi_gamma_2}) is  minimum when $d_0 = \bar\gamma \nu_0^2$. Thus, the probabilities of transmission  which minimize  $\Psi(\boldsymbol\gamma)$ are given by
\begin{eqnarray}
\gamma_j^{\mathrm{opt}} = \bar\gamma \frac{\nu_0^2}{\nu_j^2}\label{gamma_opt}
\end{eqnarray}
for $j=0,\cdots, N-1$ and the minimum value of $\Psi(\boldsymbol\gamma)$ is
\begin{eqnarray*}
\Psi(\boldsymbol\gamma^{\mathrm{opt}}) = \sqrt{\frac{\nu_{N-1}^2}{\bar\gamma \nu_0^2}}.
\end{eqnarray*}

With this design of $\boldsymbol\gamma$, the number of MAC transmissions required for reliable  signal recovery at  the fusion center scales as
\begin{eqnarray}
M = \mathcal O\left( \sqrt{\frac{C_1(\boldsymbol\nu)}{\bar\gamma}} k {\log(2N / \epsilon)} \right) \label{M_opt_gamma}
\end{eqnarray}
where $C_1(\boldsymbol\nu) = \frac{\underset{j}{\max}\{\nu_{j}^2\}}{\underset{j}{\min}\{\nu_{j}^2\}}$
Thus, the impact of inhomogeneous channel fading with the optimal design of $\boldsymbol\gamma$ on $M$  appears as the ratio between $\underset{j}{\max}\{\nu_{j}\}$ and $\underset{j}{\min}\{\nu_{j}\}$.

\subsection{Nonuniform recovery when $\mathbf B$ is dense}\label{Dense_B}
Here we study the  special case where  $\gamma_j= 1$ for $j =0, \cdots N-1$ so that $\mathbf B$ is a dense matrix.  In this case,  $M_1$ in (\ref{M_1_dominant})  dominates  $M_2$ in  (\ref{M_2_dominant}). Thus, $M=M_1$ and we have, \begin{eqnarray}
M = \mathcal O\left(C_2(\boldsymbol\nu) k {\log(2N/\epsilon)}\right)\label{M_gamma1}
\end{eqnarray}
where $C_2(\boldsymbol\nu) = \left(\frac{\underset{j}{\max}\{\nu_{j}^2\}}{\underset{j}{\min}\{\nu_{j}^2\}}\right)^2$ and we assume $\sigma_j^2 = \sigma_a^2$ for all $j$.
Note that in this case $\bar\gamma=1$. From (\ref{M_opt_gamma}) and (\ref{M_gamma1}), it is seen that the scaling of $M$ when $\gamma=1$  is greater than that is with a sparse matrix with   properly designed probabilities transmission since $C_2(\boldsymbol\nu)\geq C_1(\boldsymbol\nu)$. This implies that, with nonidentical fading channels, it is beneficial to use sparse random projections with transmission probabilities matched  to fading statistics as in (\ref{gamma_opt})  compared to the use of dense matrices in order to reduce the total number of MAC transmissions. While (\ref{M_gamma1}) provides a scaling, the exact  $M$ required for reliable sparse signal recovery  is illustrated in numerical results section (Fig. \ref{fig_MSE_M_diffgamma_noniidfading}) for dense and sparse matrices with nonidentical channels.

As will be shown in the next section, when the matrix $\mathbf B$ has dense iid elements (so that  $\gamma_j= 1$ for $j =0, \cdots N-1$  and $\nu_0^2 = \cdots= \nu_{N-1}^2$),   we get  $M = \mathcal O (k\log(2N/\epsilon))$. From (\ref{M_opt_gamma}) and (\ref{M_gamma1}), it is seen that, the presence of non identical fading channels (the inhomogeneity) increases the required number of MAC transmissions by a  factor of $\sqrt{\frac{C_1(\boldsymbol\nu)}{\bar\gamma}}$ with sparse projections  and $C_2(\boldsymbol\nu)$ with dense projections, respectively,   compared to that required with   identical   channels. It is further worth mentioning   that we obtain dominant parts of $M_1$ and $M_2$ as in   (\ref{M_1_dominant}) and  (\ref{M_2_dominant})  using  lower bounds for (\ref{M_1}) and (\ref{M_2}), respectively. Thus, the impact of $C_1(\boldsymbol\nu)$ and $C_2(\boldsymbol\nu)$  on (\ref{M_opt_gamma}) and (\ref{M_gamma1}), respectively,  can  be scaled versions of them.

\subsection{Nonuniform recovery when $\mathbf B$ is isotropic}

Now consider the special case where $\gamma_j =\gamma$, $\bar{\sigma}_j = \bar{\sigma}=\sigma_a\nu_h$ for all $j$. Then,  the elements of $\mathbf B $ are iid random variables  and the columns and rows of the  scaled random matrix $\frac{1}{\sqrt{2 \gamma \bar\sigma^2}} \mathbf B$ are isotropic.  From Theorem \ref{theorem_1}, we have the following Corollary.
\begin{corollary}\label{corollary_2}
Assume  $\gamma_j = \gamma$ and $\bar{\sigma}_j = \bar{\sigma}$ for all $j$. Then when
 \begin{eqnarray}
 M=  \mathcal O\left(\frac{k}{\sqrt{\gamma}} \log(2N/\epsilon)\right)   \label{M_iid}
 \end{eqnarray} 
 $\mathbf x$ can be uniquely determined based on (\ref{l_1}) with high probability, where $0< \epsilon < 1$ is as defined in Theorem \ref{theorem_1}.
\end{corollary}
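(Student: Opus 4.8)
The plan is to obtain Corollary \ref{corollary_2} as a direct specialization of Theorem \ref{theorem_1} to the identical case $\gamma_j = \gamma$ and $\bar\sigma_j = \bar\sigma$. First I would substitute these common values into the three problem constants appearing in (\ref{M_1}) and (\ref{M_2}). Since every product $\gamma_j\bar\sigma_j^2$ equals the single value $\gamma\bar\sigma^2$, the maximum and the minimum coincide, giving $\eta_{\max}(\boldsymbol\gamma, \bar{\boldsymbol\sigma}) = \eta_{\min}(\boldsymbol\gamma, \bar{\boldsymbol\sigma}) = \gamma\bar\sigma^2$, while $\tilde\eta_{\max}(\bar{\boldsymbol\sigma}) = \bar\sigma$. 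The crucial simplification is that the ratio $\eta_{\max}/\eta_{\min}$ collapses to $1$, which removes the inhomogeneity penalty from both bounds.

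With this substitution, $M_1$ reduces to $2k\bigl(\sqrt{\log(2N/\epsilon)} + \sqrt{\log(k/\epsilon')/(2c')}\bigr)^2$, whose dominant term is $\mathcal{O}(k\log(2N/\epsilon))$. For $M_2$, the coefficient multiplying $R\log(2N/\epsilon)$ becomes $\tilde\eta_{\max}/\sqrt{\eta_{\min}} = \bar\sigma/\sqrt{\gamma\bar\sigma^2} = 1/\sqrt{\gamma}$, so that $M_2 = 2k\bigl((1/\sqrt{\gamma})R\log(2N/\epsilon) + \sqrt{\log(k/\epsilon')/(2c')}\bigr)^2$. I would then carry through the dominant-term analysis, retaining only the first summand in each square.

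The main obstacle is that $R$ is not a fixed constant but depends on $M$ itself. Invoking the bound $R = \mathcal{O}(\sqrt{k/(2M)})$ established (for the density (\ref{f_u})) in the discussion following Theorem \ref{theorem_1}, the dominant contribution to $M_2$ becomes $\mathcal{O}\bigl((k^2/(\gamma M))(\log(2N/\epsilon))^2\bigr)$, so the requirement $M \geq M_2$ is a self-referential inequality. I would resolve it by writing $M \gtrsim (k^2/(\gamma M))(\log(2N/\epsilon))^2$, multiplying through by $M$ to obtain $M^2 \gtrsim (k^2/\gamma)(\log(2N/\epsilon))^2$, and taking square roots, which yields $M = \mathcal{O}\bigl((k/\sqrt{\gamma})\log(2N/\epsilon)\bigr)$.

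Finally, I would verify that this choice of $M$ also dominates $M_1$. Since $0 < \gamma \leq 1$ gives $1/\sqrt{\gamma} \geq 1$, the $M_2$ scaling is at least the $M_1$ scaling, so $\max\{M_1, M_2\} = M_2$ and the bound (\ref{M_iid}) indeed satisfies the sufficient condition (\ref{M_noniid}) of Theorem \ref{theorem_1}. Applying that theorem then guarantees that $\mathbf{x}$ is the unique solution of (\ref{l_1}) with probability exceeding $1 - \max(\epsilon, \epsilon')$, which completes the argument.
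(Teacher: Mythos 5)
Your proposal is correct and follows essentially the same route as the paper: the paper's own proof simply substitutes $\eta_{\max}=\eta_{\min}=\gamma\bar\sigma^2$ and $\tilde\eta_{\max}=\bar\sigma$ into the already-derived dominant scalings (\ref{M_1_dominant}) and (\ref{M_2_dominant}) (which themselves encode the $R=\mathcal{O}(\sqrt{k/(2M)})$ substitution and the self-referential resolution you carry out explicitly), obtaining $\mathcal{O}(k\log(2N/\epsilon))$ and $\mathcal{O}((k/\sqrt{\gamma})\log(2N/\epsilon))$ and concluding that $M_2$ dominates since $\gamma\leq 1$. Your version merely re-derives those dominant terms from (\ref{M_1}) and (\ref{M_2}) rather than citing them, which is a matter of presentation, not substance.
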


\begin{proof}
When $\gamma_j = \gamma$ and $\bar{\sigma}_j = \bar{\sigma}$ for all $j$, we have $\eta_{\min} = \eta_{\max} = \gamma \bar\sigma^2$ and $\tilde\eta_{\max} = \bar\sigma$. Then, the  scaling of $M_1$ in (\ref{M_1_dominant}) reduces to $\mathcal O\left(k \log (2N/\epsilon)\right)$ while  the scaling of $M_2$ in (\ref{M_2_dominant}) reduces to $\mathcal O\left(\frac{k}{\sqrt \gamma} \log (2N/\epsilon)\right)$0. Since $\gamma \leq 1$, $M_2$ dominates $M_1$.
\end{proof}

When $\gamma=1$, the matrix $\mathbf B$ is dense and  the elements are iid doubly   exponential. Then $\mathcal O (k\log(2N/\epsilon))$ measurements are sufficient for reliable recovery of $\mathbf x$. As $\gamma$ decreases, equivalently when the matrix $\mathbf B$ becomes more sparse, the minimum $M$ required for sparse signal recovery  increases.  In particular, when $\gamma < 1$,  the product $\gamma k $ plays an important role in determining $M$. It is noted that $\gamma k $ reflects the average number of nonzero coefficients  of $\mathbf x$ that align with the nonzero coefficients in each row of the sparse projection  matrix $\mathbf B$.  In Table \ref{table_M_gamma_k}, we summarize the scalings of $M$ required for recovery of $\mathbf x$ in  different regimes of $\gamma k$. In particular,
\begin{itemize}
\item when $\gamma k = \tau_0$ where $\tau_0$ is a constant, we have $\gamma \propto \frac{1}{k}$. Then, when  $k$ is sublinear with respect to $N$ so that $k = o(N)$,  $\mathcal O(k^{3/2} \log(N))$ measurements are sufficient for reliable recovery of given $\mathbf x$. It is noted that this scaling is only slightly greater than  $\mathcal O(k \log(N))$ which is the scaling   required for a dense matrix with iid elements.  This observation  is intuitive since, when $k=o(N)$, $\gamma$ is not very small and the matrix $\mathbf B$ is not 'very' sparse. On the other hand, when $k$ is linear with respect to $N$ so that $k = \Theta(N)$, and $\gamma \propto \frac{1}{k}$, $\mathcal O(N^{3/2} \log(N))$ measurements are required.
    \item when $\gamma k = \varepsilon  N$ with  $0<\varepsilon < \frac{k}{N}$, it is required to have $\mathcal O\left(\frac{k^{3/2}}{\sqrt {\varepsilon N}} \log(N)\right)$ measurements when $k = o(N)$. It is noted that, with this setting we have $\varepsilon<  \frac{k}{N}$ and $\varepsilon  \rightarrow 0$ as $N\rightarrow\infty$.   On the other hand, when $k = \Theta(N)$ and $\gamma k = \varepsilon N$ with  $0<\varepsilon < \frac{k}{N}$,   $M $ should be scaled as $\mathcal O\left(\frac{N}{\sqrt{\varepsilon}} \log(N)\right)$. With this setting $\varepsilon < \Theta (1)$, thus, $\mathcal O(N\log (N))$ measurements are needed  for reliable recovery of $\mathbf x$.
    \end{itemize}

\begin{table}
\caption{Minimum  M in the different regimes of $\gamma k$}
\centering
\centering
\begin{small}
\begin{tabular}{|l|l|l|}
  \hline
 $\gamma k$  & $M$ when $k=o (N)$ & $M$ when $k=\Theta(N)$ \\
  \hline
 $\gamma k = \tau_0$ & $\mathcal O \left(k^{3/2}\log(N)\right)$ & $\mathcal O(N^{3/2} \log(N))$  \\
     \hline
     $\gamma k = \varepsilon N$ & $\mathcal O\left(\frac{k^{3/2}}{\sqrt {\varepsilon N}} \log(N)\right)$ & $\mathcal O\left(\frac{N}{\sqrt{\varepsilon}} \log(N)\right)$\\
     $0 < \varepsilon < \frac{k}{N}$ & ~ & \\
     \hline
 \end{tabular}
\end{small}\label{table_M_gamma_k}
\end{table}

\subsubsection{Design of $\gamma$ under total network energy constraints}
For given $M$, the average energy  required by the network to achieve complete sparse signal recovery in the presence of iid fading channels  is given by,
 \begin{eqnarray*}
 E_{req} =  M \gamma N \sigma_a^2.
 \end{eqnarray*}
For complete signal recovery with probability at least $1-\epsilon$, we should have
 \begin{eqnarray*}
M \geq C_0 \frac{k}{\sqrt\gamma} \log(2N/\epsilon)
\end{eqnarray*}
where $C_0$ is a constant.
Then, we have,
 \begin{eqnarray}
 E_{req} \geq
 \sqrt{ \gamma} C_0 \sigma_a^2 k N \log(2N/ \epsilon).     \label{E_req}
 \end{eqnarray}
Assume that the network is subject to a total energy constraint so that we have to make sure,
\begin{eqnarray*}
 E_{req}\leq \bar E.
\end{eqnarray*}
Then, $\gamma$ should satisfy the following constraint:
\begin{eqnarray*}
\sqrt \gamma \leq \min\left\{1,\frac{\bar E}{C_0 \sigma_a^2 k N \log(2N/ \epsilon) }\right\}.
\end{eqnarray*}

\section{Nonuniform Recovery Guarantees with  General Sub-exponential Matrices}\label{nonuniform_anysubexp}
In  the following, we consider recovering $\mathbf x$ from  $\mathbf y =\mathbf B \mathbf x + \mathbf v$ when the elements of  $\mathbf B$  are general sub-exponential random variables and the rows of $\mathbf B$ are non-isotropic.

First,  let us define the sub-exponential norm of a sub-exponential random variable which will be helpful in the following analysis.
 \begin{definition}[sub-exponential norm \cite{Vershynin_book1}]
Let $x$ be a sub-exponential random variable. The sub-exponential norm of $x$, $||x||_{\psi_1}$, is defined by
\begin{eqnarray*}
||x||_{\psi_1} = \underset{p\geq 1}{\sup}~ \frac{1}{p} (\mathbb E \{|x|^p\})^{1/p}.
\end{eqnarray*}
\end{definition}
Further, let us assume that the each row of $\mathbf B$  has the same second moment matrix $\Sigma_B$.
Then,  we have the following Theorem.
\begin{theorem}\label{theorem_general_sub}
Let $\mathbf x$ be a $k$-sparse vector with the support set $\mathcal S$ and the matrix $\mathbf B$ contain independent sub-exponential random variables. Let $\rho_{\max}$ denotes the maximum sub-exponential norm  over all  the realizations. Further, assume that rows of $\mathbf B$, $\mathbf B_i$'s have  the same second moment matrix $\Sigma_B$ and $||\mathbf B_i||_2 \leq \sqrt{T_0}$ almost surely for all $i$. Let $\lambda_{\min}$ denote the minimum eigenvalue of $\Sigma_B^T \Sigma_B$.  Then, when the number of measurements
\begin{eqnarray}
M \geq \frac{1}{\lambda_{\min}} \left( \frac{\sqrt{k}}{\beta_1} + \sqrt{\frac{T_0 \log(k/\epsilon'_1)}{c_1'}}\right)^2\label{M_general_sub}
\end{eqnarray}
(\ref{l_1}) provides the unique solution for $\mathbf x$ with a given  support $\mathcal S$ with probability exceeding $1-\max(\epsilon_1, \epsilon'_1)$ where
\begin{eqnarray*}
\beta_1 =  \min\left( \frac{1}{\rho_{\max}}\sqrt{\frac{c_1}{\log(2N/\epsilon_1)}}, \frac{c_1}{\rho_{\max} R_1 \log(2N/\epsilon_1)}\right)
\end{eqnarray*}
and  $R_1$ is defined  such that $\frac{||\mathbf b_S||_{\infty}}{||\mathbf b_S||_2} \leq R_1$ almost surely,  and $\mathbf b_S = (\mathbf B_S^{\dagger})^T \mathrm{sgn}(\mathbf x^S)$ as defined before.
\end{theorem}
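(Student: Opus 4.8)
The plan is to verify the dual sign condition stated just before Theorem \ref{theorem_1}, namely $|\langle (\mathbf{B}_S)^{\dagger}\mathbf{b}_l, \mathrm{sgn}(\mathbf{x}^S)\rangle| < 1$ for all $l\in\mathcal{U}\setminus\mathcal{S}$, which is sufficient for $\mathbf{x}$ to be the unique solution of (\ref{l_1}). Writing $\mathbf{b}_S = (\mathbf{B}_S^{\dagger})^T\mathrm{sgn}(\mathbf{x}^S)$ and letting $\mathbf{b}_l$ be the $l$-th column of $\mathbf{B}$, the quantity in the condition equals $\langle \mathbf{b}_S, \mathbf{b}_l\rangle = \sum_{i=0}^{M-1}\mathbf{b}_S(i)\mathbf{B}_{il}$. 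Since the entries of $\mathbf{B}$ are independent, for each $l\notin\mathcal{S}$ the column $\mathbf{b}_l$ is independent of $\mathbf{B}_S$, hence of $\mathbf{b}_S$. I would therefore condition on $\mathbf{B}_S$ and treat $\langle \mathbf{b}_S, \mathbf{b}_l\rangle$ as a weighted sum of independent, mean-zero sub-exponential variables with fixed weights $\mathbf{b}_S(i)$.

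First I would apply the general sub-exponential Bernstein inequality (Proposition $5.16$ of \cite{Vershynin_book1}, the arbitrary-norm counterpart of Proposition \ref{prop_bern_inequ}) with threshold $t=1$, obtaining, conditionally on $\mathbf{b}_S$,
\[
Pr\!\left(|\langle\mathbf{b}_S,\mathbf{b}_l\rangle|\ge 1\right)\le 2\exp\!\left(-c_1\min\!\left(\tfrac{1}{\rho_{\max}^2\|\mathbf{b}_S\|_2^2},\ \tfrac{1}{\rho_{\max}\|\mathbf{b}_S\|_{\infty}}\right)\right).
\]
Bounding $\|\mathbf{b}_S\|_{\infty}\le R_1\|\mathbf{b}_S\|_2$ and taking a union bound over the at most $N$ indices $l\notin\mathcal{S}$, I would require the exponent to exceed $\log(2N/\epsilon_1)$. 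Solving each of the two terms in the minimum for $\|\mathbf{b}_S\|_2$ produces exactly the threshold $\|\mathbf{b}_S\|_2\le\beta_1$ with $\beta_1$ as defined in the statement, so the sign condition holds with conditional probability at least $1-\epsilon_1$ on the event $\{\|\mathbf{b}_S\|_2\le\beta_1\}$.

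It then remains to guarantee $\|\mathbf{b}_S\|_2\le\beta_1$ with probability at least $1-\epsilon'_1$. Using $\mathbf{b}_S=\mathbf{B}_S(\mathbf{B}_S^T\mathbf{B}_S)^{-1}\mathrm{sgn}(\mathbf{x}^S)$ I would compute $\|\mathbf{b}_S\|_2^2=\mathrm{sgn}(\mathbf{x}^S)^T(\mathbf{B}_S^T\mathbf{B}_S)^{-1}\mathrm{sgn}(\mathbf{x}^S)\le k/s_{\min}^2(\mathbf{B}_S)$, reducing the task to a lower bound on $s_{\min}(\mathbf{B}_S)$. For this I would invoke the extreme-singular-value estimate for a matrix with independent rows of almost surely bounded norm (Theorem $5.41$ of \cite{Vershynin_book1}): after whitening the restricted rows $\mathbf{B}_{i,S}$ by $\Sigma_B^{-1/2}$ so that they become isotropic, and using $\|\mathbf{B}_i\|_2\le\sqrt{T_0}$ together with the common second moment $\Sigma_B$, the choice $t=\sqrt{\log(k/\epsilon'_1)/c_1'}$ yields, with probability at least $1-\epsilon'_1$,
\[
s_{\min}(\mathbf{B}_S)\ge \sqrt{M\lambda_{\min}}-\sqrt{\tfrac{T_0\log(k/\epsilon'_1)}{c_1'}}.
\]
Combining this with $\|\mathbf{b}_S\|_2\le\sqrt{k}/s_{\min}(\mathbf{B}_S)$, the hypothesis $M\ge\lambda_{\min}^{-1}\big(\sqrt{k}/\beta_1+\sqrt{T_0\log(k/\epsilon'_1)/c_1'}\big)^2$ is precisely what forces $\sqrt{k}/s_{\min}(\mathbf{B}_S)\le\beta_1$. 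Intersecting the two high-probability events then gives the claimed success probability $1-\max(\epsilon_1,\epsilon'_1)$.

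I expect the main obstacle to be the lower bound on $s_{\min}(\mathbf{B}_S)$: the rows of $\mathbf{B}_S$ are non-isotropic and only sub-exponential (heavy-tailed) rather than sub-Gaussian, so the usual Rudelson--Vershynin bounds do not apply verbatim. The almost sure norm bound $\|\mathbf{B}_i\|_2\le\sqrt{T_0}$ is exactly the hypothesis that unlocks the bounded-row version of the singular-value estimate, and the whitening step is what transfers the dependence on the row covariance into $\lambda_{\min}$ of $\Sigma_B^T\Sigma_B$; carrying out the whitening rigorously while keeping the a.s.\ norm bound (and hence the failure probability) under control is the delicate part of the argument.
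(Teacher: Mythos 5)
Your proposal follows essentially the same route as the paper's Appendix F: the dual sign condition, the sub-exponential Bernstein bound with a union bound over $l\notin\mathcal S$ yielding the threshold $\|\mathbf b_S\|_2\le\beta_1$, the reduction $\|\mathbf b_S\|_2\le\sqrt{k}/s_{\min}(\mathbf B_S)$, and the bounded-row singular-value concentration result (the paper's Theorem \ref{theorem_2}). The only difference is in the last step, where you whiten the rows by $\Sigma_B^{-1/2}$ to obtain $s_{\min}(\mathbf B_S)\ge\sqrt{M\lambda_{\min}}-t\sqrt{T_0}$ directly, whereas the paper applies its Theorem \ref{theorem_2} with $\|\Sigma_{B_S}\|^{1/2}$ and then substitutes $\lambda_{\min}$ at the end; your whitening argument is, if anything, the more careful justification of how $\lambda_{\min}$ enters the bound.
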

\begin{proof}
See Appendix F.
\end{proof}

The dominant part of $M$ in (\ref{M_general_sub}) scales as
\begin{eqnarray*}
M = \mathcal O \left(\frac{\rho_{\max}^2}{\lambda_{\min}} k \tilde{\beta}_1(N)\right)
\end{eqnarray*}
where
\begin{eqnarray*}
\tilde{\beta}_1(N) = \max\left\{\frac{\log(2N/\epsilon_1)}{c_1}, \frac{R_1^2 \log^2(2N/\epsilon_1)}{c_1^2}\right\}.
\end{eqnarray*}
Then, (\ref{l_1}) provides a unique solution for $\mathbf x$ with high probability if
\begin{eqnarray}
M = \left\{
\begin{array}{ccc}
\mathcal O\left(\frac{\rho_{\max}^2}{\lambda_{\min}} k R_1^2 \log^2(2N/\epsilon_1)\right)~& if~ R_1 \geq \mathcal O\left(\frac{1}{\sqrt{\log N}}\right)\\
\mathcal O\left(\frac{\rho_{\max}^2}{\lambda_{\min}} k \log(2N/\epsilon_1)\right)~& if~ R_1 \leq \mathcal O\left(\frac{1}{\sqrt{\log N}}\right)
\end{array}\right. .
\end{eqnarray}

Thus, it is observed that,  a threshold on $R_1$, the maximum peak-to-average energy of $\mathbf b_s$ over all $\mathcal S$ plays an important role in determining  the number of compressive measurements required for reliable sparse signal recovery with random matrices with general sub-exponential random variables.

\section{Numerical  Results}\label{numerical}
In this section, we provide numerical results to illustrate the performance of sparse signal recovery in the presence of fading. We consider that the nonzero entries of $\mathbf x$ are drawn from a uniform distribution in the range $[-20,-10] \cup [10, 20]$. { For numerical results, the primal-dual interior point method is used to  solve for $\mathbf x$  in  (\ref{l_1})  while (\ref{l_12})  is solved after converting to the second-order cone program  as presented in \cite{Candes_l1magic2005}. In Figures  \ref{fig_MSE_gamma_M}-\ref{fig_MSE_M_diffgamma_noniidfading}, the problem posed in  (\ref{l_1}) is considered where the noise power at the fusion center is assumed to be zero.}

\subsection{iid fading channels and identical  measurement  sparsity parameters}
First, we assume that $\sigma_j^2 = \sigma_a^2$, $\gamma_j =\gamma$ and $\nu_j^2=\nu_h^2$ for $j=0,\cdots, N-1$. The performance metric is taken as MSE which  is defined as,
\begin{eqnarray}
MSE = \mathbb E\left\{\frac{||\mathbf x - \hat{\mathbf x}||_2}{||\mathbf x||_2}\right\}
\end{eqnarray}
where $\hat{\mathbf x}$ is the estimated signal.

\begin{figure}
\centerline{\epsfig{figure=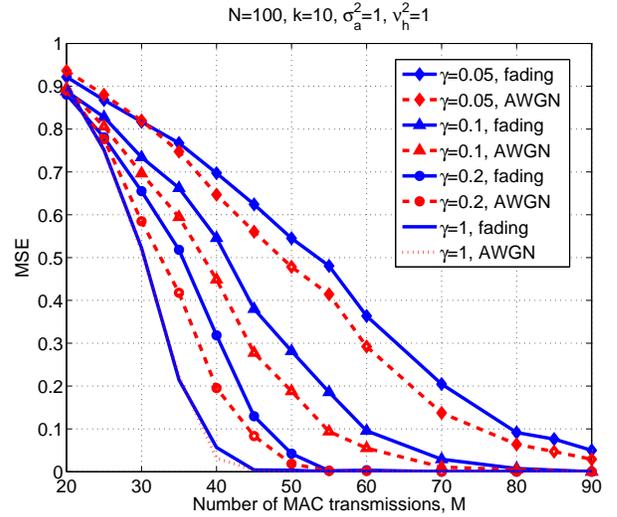,width=9.0cm}}
\caption{MSE vs number of MAC transmissions with iid fading channels}
  \label{fig_MSE_gamma_M}
\end{figure}

\begin{figure}
\centerline{\epsfig{figure=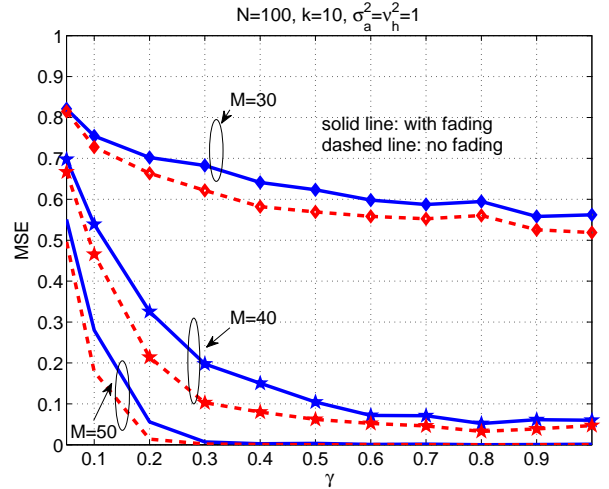,width=9.0cm}}
\caption{MSE vs measurement sparsity index $\gamma$ with iid fading channels}
  \label{fig_MSE_gamma}
\end{figure}
In Figs. \ref{fig_MSE_gamma_M} and \ref{fig_MSE_gamma},   the MSE vs number of MAC transmissions $M$ and the measurement sparsity index $\gamma$, respectively  is plotted for $N=100$, $k=10$, $\sigma_a^2 = 1$ and $\nu_h^2 =1 $. We further plot the performance in the absence of  fading; i.e. assuming AWGN channels so that $\mathbf B = \mathbf A$. It is observed form both Figs. \ref{fig_MSE_gamma_M} and \ref{fig_MSE_gamma} that when $\gamma$ is not very small, (i.e. when $\mathbf B$ is not very sparse), the impact of fading in recovering a given signal is not significant compared to that with AWGN channels. It is noted that the statistical properties of the measurement matrix changes from light-tailed to heavy-tailed when channels change from AWGN to Rayleigh fading. However, as $M$ and $\gamma$ increase, there is no significant difference in recovery performance with both types of channels.
Fig. \ref{fig_MSE_gamma} illustrates  the trade-off between $M$ and $\gamma$. It is seen that as $\gamma $  increases  beyond  $\approx 0.3$, the MSE performance decreases slowly with $\gamma$ for all values of $M$. This corroborates the theoretical results in (\ref{M_iid}) in which  the required number of MAC transmissions that enable recovery of $\mathbf x$ based on (\ref{l_1}) is proportional to $\frac{1}{\sqrt \gamma}$.

\begin{figure}
\centerline{\epsfig{figure=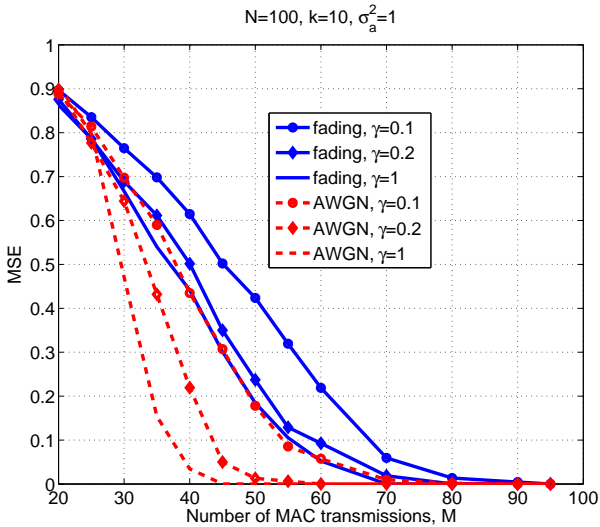,width=9.0cm}}
\caption{MSE vs number of MAC transmissions with nonidentical  fading channels; $\sigma_a^2=1$, $\nu_j\in [1,10] $ for all $j$, $N=100$, $k=10$ }
  \label{fig_MSE_M_noniidfading}
\end{figure}

\begin{figure*}[htb]
 \begin{minipage}[b]{.48\linewidth}
  \centering
  \centerline{\includegraphics[width=9.50cm]{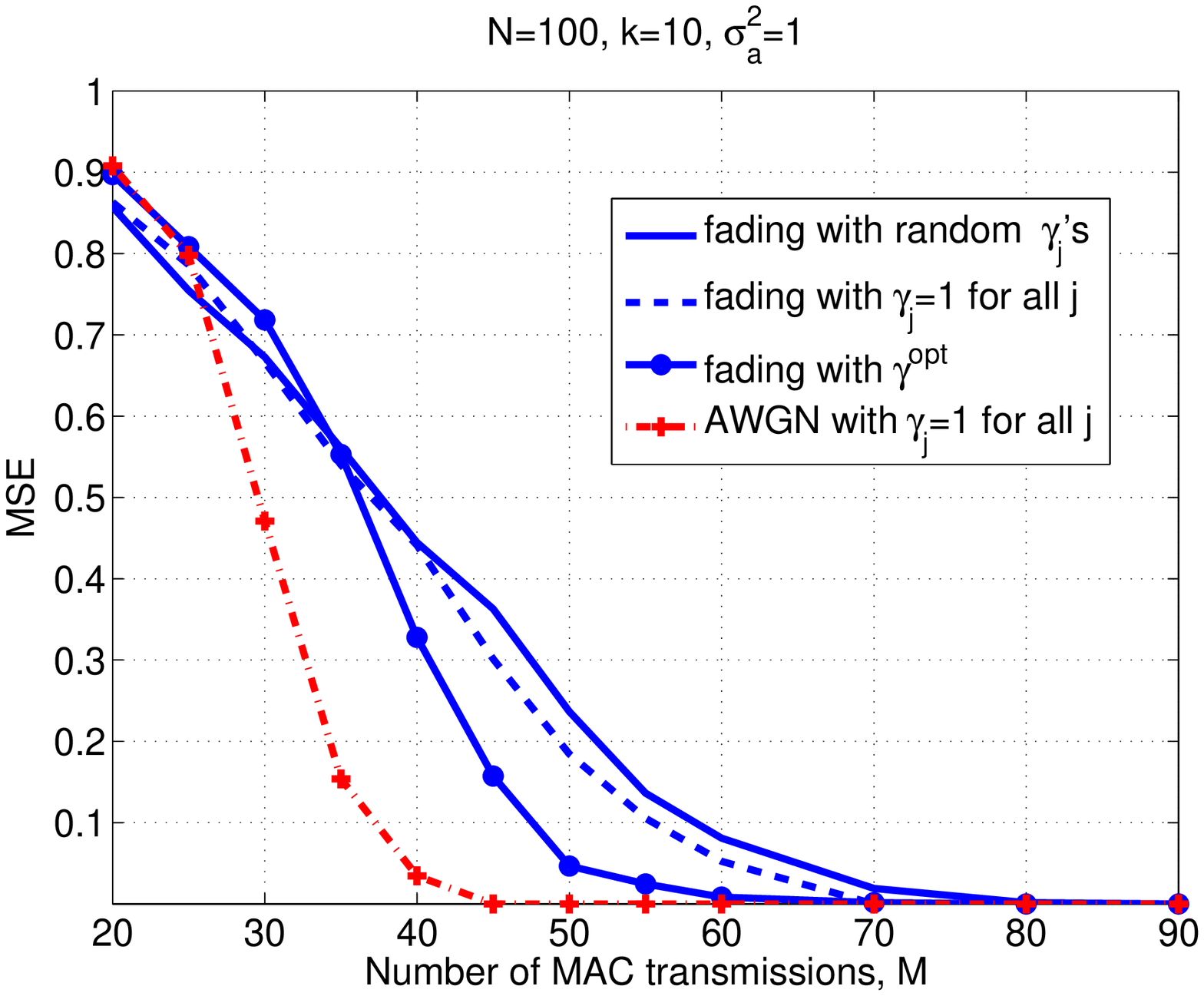}}
  \centerline{(a) $\nu_{\min} =1$, $\nu_{\max} = 10$}\medskip
\end{minipage}
\hfill
\begin{minipage}[b]{0.48\linewidth}
  \centering
  \centerline{\includegraphics[width=9.0cm]{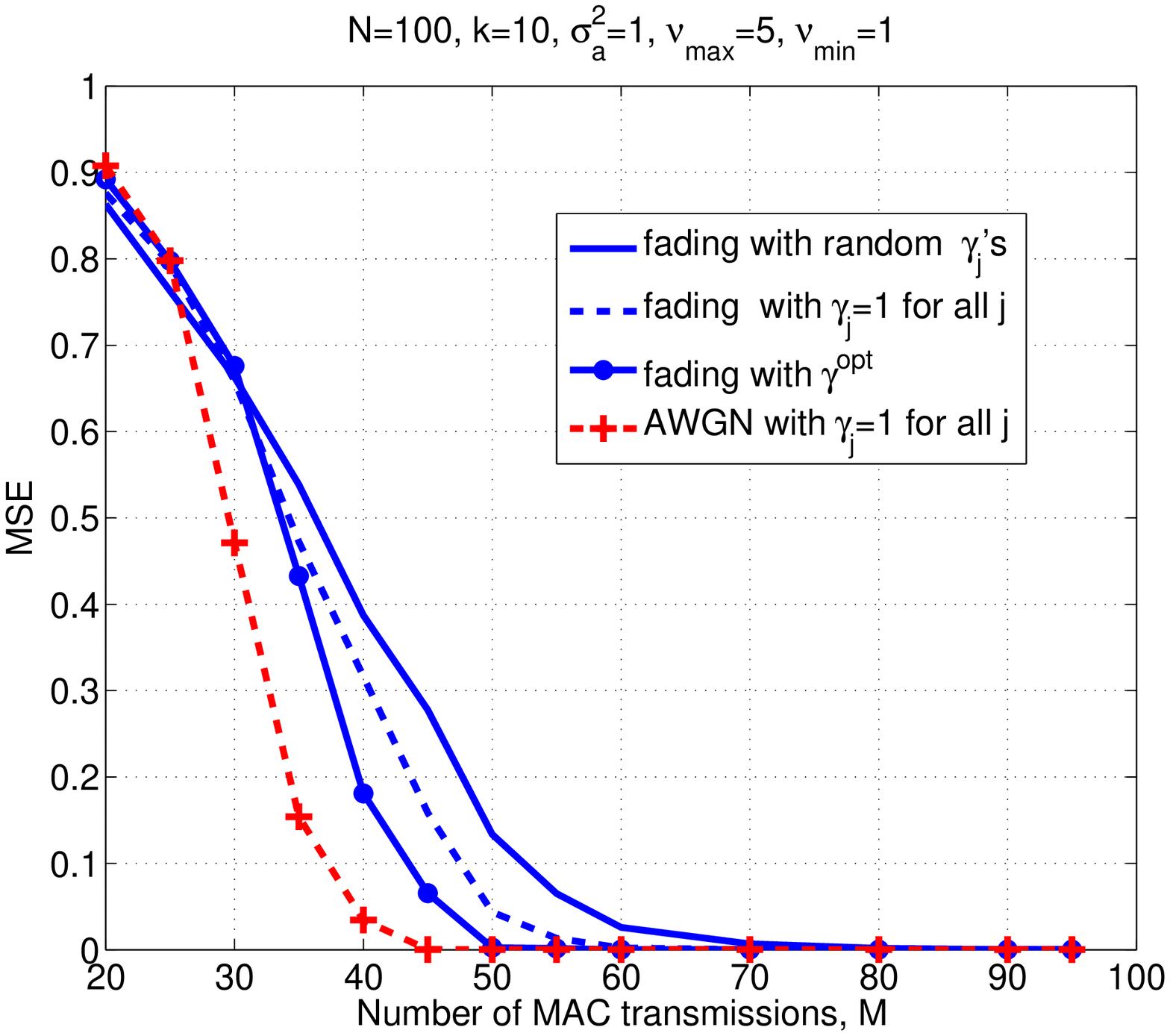}}
  \centerline{(b) $\nu_{\min} =1$, $\nu_{\max }= 5$}\medskip
\end{minipage}
\caption{MSE vs number of MAC transmissions with nonidentical  fading channels; $\sigma_a^2=1$, $N=100$, $k=10$ }
\label{fig_MSE_M_diffgamma_noniidfading}
\end{figure*}

\subsection{Nonidentical fading channels and identical  measurement sparsity parameters}
Next, we consider the case where fading channels are independent but nonidentical and the measurement sparsity parameter and the  power  at the each node are  the same over all the  nodes; i.e. $\gamma_j = \gamma$ and $\sigma_j^2 = \sigma_a^2$ for all $j=0,\cdots, N-1$. Each $\nu_j$ is selected uniformly from $[\nu_{\min},\nu_{\max}]$. It is noted that under this case, the elements of the matrix $\mathbf A$ are iid but those  in $\mathbf B$ are non iid. In Fig. \ref{fig_MSE_M_noniidfading}, we plot the MSE vs $M$.  We let $\nu_{\min}=1$, $\nu_{\max}=10$, $k=10$ and $N=100$. In contrast to Fig. \ref{fig_MSE_gamma_M} with iid fading channels, it can be seen from Fig. \ref{fig_MSE_M_noniidfading} that the presence of nonidentical fading channels reduces the capability of sparse recovery quite significantly compared to AWGN channels even when  the matrices are dense (i.e. $\gamma=1$). The reason is that,  under this case all the nodes transmit with equal probability irrespective of the quality of the fading channels leading to an inhomogeneous measurement matrix. On the other hand, with AWGN, the quality of all the channels is identical and the matrix $\mathbf A$ is isotropic. This will be further discussed in the  next section.

\subsection{Nonidentical fading channels and different measurement  sparsity parameters}
In this section,  we consider the case where  $\sigma_j^2 = \sigma_a^2$, and $\nu_j$'s  are selected uniformly from  $[\nu_{\min},\nu_{\max}]$  and arranged  in ascending  order for   $j=0,\cdots, N-1$. The values for $\gamma_j$'s are selected as in (\ref{gamma_opt}) so that the number of MAC transmissions is minimum  with respect to $\boldsymbol\gamma$.   We further assume $\frac{E}{\sigma_a^2} =1$ so that $\bar\gamma =1$ in (\ref{gamma_opt}).  In Fig. \ref{fig_MSE_M_diffgamma_noniidfading}, we plot the MSE vs $M$ with $\boldsymbol\gamma^{\mathrm{opt}}$. We further plot the performance   with $\gamma_j =1$ for all $j$ considering both AWGN and fading channels. In Fig. \ref{fig_MSE_M_diffgamma_noniidfading}(a) we let $\nu_{\min} = 1$ and $\nu_{\max} =10$ while in Fig. \ref{fig_MSE_M_diffgamma_noniidfading}(b), we have $\nu_{\min} = 1$ and $\nu_{\max} =5$. We make several important observation here. When $\gamma_j=1$ and $\sigma_j^2 = \sigma_a^2$ for all $j$, $\mathbf A$ is a iid Gaussian matrix, and $\mathbf H$ is a nonidentical (but independent) dense matrix with Rayleigh random variables. In that case, as seen in Fig. \ref{fig_MSE_M_diffgamma_noniidfading} , the recovery performance with $\mathbf B=\mathbf A\odot \mathbf H$ (shown in blue dash line) is significantly degraded compared to that with only $\mathbf A$ (shown in red marked  dash-dot line) due to the inhomogeneity of the matrix $\mathbf B$. This corroborates the theoretical results  shown in Section \ref{Dense_B}.     When the matrix $\mathbf A$ is made sparse with sparsity parameters as in (\ref{gamma_opt}), it can be seen that,  recovery performance (blue marked solid line) comparable to AWGN can be achieved especially the ratio $\frac{\nu_{\max}}{\nu_{\min}}$ is small. Further, when  the transmission probabilities are selected independent of $\nu_j$'s (i.e. randomly)  in the presence of nonidentical channel fading, a larger number of MAC transmissions is  necessary to achieve negligible MSE compared  to having only AWGN channels. When $M$ is small, it is observed that MSE with random $\boldsymbol\gamma$ is slightly smaller than that with  optimal gamma as found in (\ref{gamma_opt}).  It is worth mentioning that $M$ is optimized over $\boldsymbol\gamma$ considering perfect  signal recovery  and this optimality may not hold  when $M$ is very small (i.e. in the region where reliable signal recovery is not  guaranteed irrespective of $\boldsymbol\gamma$).

{ In Figures \ref{fig_MSE_gamma_M}-\ref{fig_MSE_M_diffgamma_noniidfading}, we assumed that the noise power at the fusion center is zero and the recovery is performed based on (\ref{l_1}). In Fig. \ref{fig_MSE_M_noniidfading_noise}, we consider the problem given in (\ref{l_12}) where the observations at the fusion center are noisy. Different values for the noise variance $\sigma_v^2$ are considered and $\epsilon_v$ is selected such that $\epsilon_v = \sigma_v\sqrt{M}\sqrt{1 + 2\sqrt 2/\sqrt M}$ \cite{Candes_l1magic2005}. As expected, it is observed from Fig. \ref{fig_MSE_M_noniidfading_noise} that as $\sigma_v^2$ increases, the recovery performance is degraded irrespective of the value of $\boldsymbol\gamma$. However, the use of  optimal $\gamma_j$, which is obtained in (\ref{gamma_opt}) considering the noiseless case,  improves the recovery  performance even when there is noise compared to the case where each node transmits  with arbitrary $\gamma_j$ for all $j$}.

\begin{figure}
\centerline{\epsfig{figure=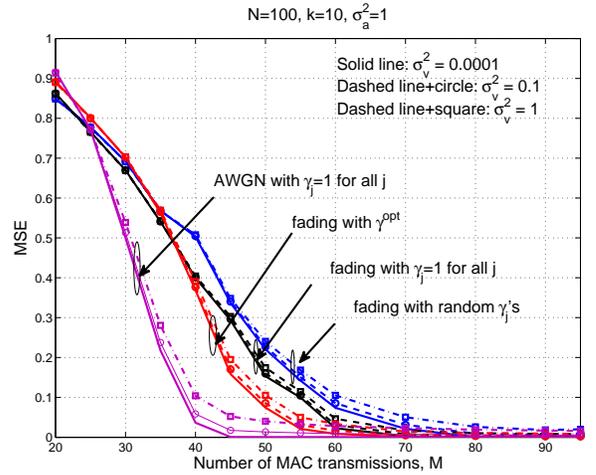,width=9.0cm}}
\caption{MSE vs number of MAC transmissions with nonidentical  fading channels and noise; $\sigma_a^2=1$, $N=100$, $k=10$, $\nu_{\min} = 1$, $\nu_{\max} = 10$}
  \label{fig_MSE_M_noniidfading_noise}
\end{figure}

\section{Conclusion}\label{conclusion}
In this paper, we considered the problem of sparse signal recovery in a distributed sensor network using sparse random  matrices. Assuming that  the channels between the sensors and the fusion center undergo fading, we derived  sufficient conditions that should be satisfied by the number of MAC transmissions to ensure recovery of a given $k$-sparse signal (i.e. for \emph{nonuniform} recovery). The impact of channel fading makes the corresponding random projection matrix heavy tailed with non identical elements compared to the widely used random matrices in the context of CS which are light tailed (sub-Gaussian). We have exploited the properties of subexponential random matrices with nonidentical elements in deriving nonuniform recovery guarantees under these conditions. We have shown that, when the channels undergo independent and  nonidentical fading, by properly  designing the probabilities of transmission at each node based on fading channel statistics, the number of measurements required for signal recovery can be reduced.  We further provided recovery guarantees of a given sparse signal when the projection matrix is  nonidentical sub-exponential in general. An interesting future work is to investigate the impact of channel interference on  sparse signal recovery in distributed networks.

\section*{Appendix A}
\subsection*{Proof of Proposition \ref{prop_1}}
Let $w=ha$ where we omit the subscripts of $h$ and $a$ for brevity. The pdf of $w$ is given by,
\begin{eqnarray}
f(w) = \int_{-\infty}^{\infty} \frac{h}{\sigma_h^2}e^{-\frac{h^2}{2\sigma_h^2}} \sqrt{\frac{1}{2\pi\sigma_a^2}} e^{-\frac{w^2}{h^2}\frac{1}{2 \sigma_a^2}} \frac{1}{|h|} dh \label{pdf_w_1}
\end{eqnarray}
Since $h\geq 0$, (\ref{pdf_w_1}) reduces to,
\begin{eqnarray*}
f(w) =\sqrt{\frac{1}{2\pi\sigma_a^2}}\frac{1}{\sigma_h^2} \int_{0}^{\infty} e^{-\frac{h^2}{2\sigma_h^2} -\frac{ w^2}{2 h^2\sigma_a^2}} dh
\end{eqnarray*}
Using the relation, $\int_0^{\infty} e^{-a x^2 - \frac{b}{x^2}} = \frac{1}{2} \sqrt{\frac{\pi}{a}} e^{-2 \sqrt{ab}}$ for $a,b > 0$, we have,
\begin{eqnarray*}
f(w) = \frac{1}{2 \bar{\sigma}}  e^{- {\frac{1}{\sigma_h\sigma_a}} |w|}
\end{eqnarray*}
where  $\bar{\sigma}=\sigma_a \sigma_h$,
which completes the proof.

\section*{Appendix B}
\subsection*{Proof of Proposition \ref{prop_mgf_sub_exp}}
When the pdf of $u$ is as given in (\ref{f_u}), we have,
\begin{eqnarray*}
\mathbb E \{e^{tu}\} &=& \int_{-\infty}^{\infty} e^{tu} \left(\gamma_j \frac{1}{2\bar\sigma_j} e^{-\frac{|u|}{\bar\sigma_j}} + (1-\gamma_j)\delta(u)\right) du\nonumber\\
&=& \frac{\gamma_j}{2\bar\sigma_j} \int_{-\infty}^0 e^{u\left(t+\frac{1}{\bar\sigma_j}\right)} du \nonumber\\
&+&  \frac{\gamma_j}{2\bar\sigma_j} \int_0^{\infty} e^{u\left(t-\frac{1}{\bar\sigma_j}\right)} du + (1-\gamma_j).
\end{eqnarray*}
When $|t| \leq  \frac{1}{\bar\sigma_j}$, it can be shown that
\begin{eqnarray}
\mathbb E \{e^{tu}\} = 1+ \gamma_j \frac{\sigma_j^2 t^2}{ 1- \bar\sigma_j^2 t ^2}.\label{E_etu}
\end{eqnarray}
This holds for any $\bar\sigma_j$ with $|t|  \leq  \frac{1}{\underset{j}{\max}\{\bar\sigma_j\}}$.
  Thus, when $  t ^2 < \frac{1}{\tilde\eta_{\max}^2}$, based on geometric series formula,  we have  $\frac{1}{ 1- \bar\sigma_j^2 t ^2} = \sum_{k=0}^{\infty}(\bar\sigma_j^2 t ^2)^k $. Thus, (\ref{E_etu})
can be approximated by,
\begin{eqnarray*}
\mathbb E \{e^{tu}\} &=& 1+ \gamma_j   \sum_{k=0}^{\infty}(\bar\sigma_j^2 t ^2)^{k+1} \nonumber\\
&\approx& 1+ \gamma_j {\sigma_j^2 t^2}\leq  e^{\gamma_j \bar\sigma_j^2 t^2} \leq  e^{\eta_{\max} t^2}
\end{eqnarray*}
where $\eta_{\max} =  \underset{j}{\max}\{\gamma_j \bar\sigma_j^2\}$,
completing the proof.

\section*{Appendix C}
\subsection*{Proof of Proposition \ref{prop_bern_inequ}}
Let $\Lambda = \sum_{i=0}^{N-1} \alpha_i u_i$. Then, using exponential Markov inequality, we have  \begin{eqnarray}
Pr(\Lambda \geq t)  &=& Pr(e^{\lambda \Lambda} \geq e^{\lambda t}) \leq e^{-\lambda t }\mathbb E\{e^{\lambda\Lambda}\}\nonumber\\
&=& e^{-\lambda t} \underset{i}{\prod} \mathbb E\{e^{\lambda \alpha_i u_i}\}. \label{Berntein_exp}
\end{eqnarray}
From Proposition \ref{prop_mgf_sub_exp}, we have,
\begin{eqnarray*}
Pr(\Lambda \geq t) \leq  e^{-\lambda t + \eta_{\max} \lambda^2 ||\boldsymbol \alpha||_2^2}
\end{eqnarray*}
when $|\lambda| \leq \frac{1}{\tilde\eta_{\max} ||\boldsymbol\alpha||_{\infty}}$.  Following similar steps as in the proof of Proposition $5.16$ in \cite{Vershynin_book1}, we get,
\begin{eqnarray*}
Pr(\Lambda \geq t) \leq  e^{-\min\left\{\frac{t^2}{4 \eta_{\max} ||\boldsymbol\alpha||_2^2}, \frac{t}{2\tilde\eta_{\max} ||\boldsymbol\alpha||_{\infty}}\right\}}.
\end{eqnarray*}
The same bound is obtained for $Pr(-\Lambda \geq t)$. Thus, we get (\ref{bern_noniid}).

\section*{Appendix D}
\subsection*{Proof of Theorem \ref{theorem_1}}

\begin{proof}(Theorem \ref{theorem_1})
 We follow similar proof techniques developed for nonuniform recovery with sub-Gaussian matrices in \cite{Ayaz_Tech1} with appropriate modifications to deal with sub-exponential random variables.
The failure probability of recovery of $\mathbf x$ based on (\ref{l_1}) is bounded by
\begin{eqnarray}
P_e &:=&  Pr(\exists l \notin \mathcal S |\langle (\mathbf B_S)^{\dagger} \mathbf b_l, \mathrm{sgn}(\mathbf x^S)\rangle| \geq 1)\nonumber\\
&\leq& (N-k) P_1^l < N P_1^l \label{P_e}
\end{eqnarray}
where   $P_1^l = Pr( |\langle (\mathbf B_S)^{\dagger} \mathbf b_l, \mathrm{sgn}(\mathbf x^S)\rangle| \geq 1 ) $.

To bound $P_1^l$, we use Proposition \ref{prop_bern_inequ}. Conditioned on $\mathbf B_{S}$, for given $l$ we have,
\begin{eqnarray*}
 P_1^l &=& Pr( |\langle (\mathbf B_S)^{\dagger} \mathbf b_l, \mathrm{sgn}(\mathbf x^S)\rangle| \geq 1 ) \nonumber\\
&=& Pr( |\langle  \mathbf b_l, (\mathbf B_S^{\dagger})^{\star} \mathrm{sgn}(\mathbf x^S)\rangle| \geq 1 )\nonumber\\
&=& Pr\left(|\sum_{j=0}^{M-1}  b_{l}(j) [(\mathbf B_{\mathcal S}^{\dagger})^* \mathrm{sgn}(\mathbf x^S) ](j)| \geq 1 \right)\nonumber\\
&\leq & 2 e^{- \min\left(\frac{1}{4 \eta_{\max} ||\mathbf b_S||_2^2}, \frac{1}{2 \tilde\eta_{\max} ||\mathbf b_S||_{\infty}}\right)}
\end{eqnarray*}
where we define $\mathbf b_S = (\mathbf B_{\mathcal S}^{\dagger})^* \mathrm{sgn}(\mathbf x^S)$ and $\eta_{\max}$ and $\tilde\eta_{\max}$ are as defined in Proposition \ref{prop_mgf_sub_exp}.  Thus,  we have
\begin{eqnarray}
P_e \leq N  2 e^{- \min\left(\frac{1}{4 \eta_{\max} ||\mathbf b_S||_2^2}, \frac{1}{2 \tilde\eta_{\max} ||\mathbf b_S||_{\infty}}\right)}.\label{Pe_bound2}
\end{eqnarray}
For $P_e$ in (\ref{Pe_bound2}) to be less than $\epsilon$,
we have to have,
\begin{eqnarray}
\min\left(\frac{1}{4 \eta_{\max} ||\mathbf b_S||_2^2}, \frac{1}{2 \tilde\eta_{\max} ||\mathbf b_S||_{\infty}}\right) \geq \log(2N/\epsilon).\label{minimum_exp}
\end{eqnarray}
We can see that (\ref{minimum_exp}) is satisfied when,
\begin{eqnarray*}
||\mathbf b_S||_2 \leq \frac{1}{2 \sqrt{\eta_{\max} \log(2N/\epsilon)}}
\end{eqnarray*}
and
\begin{eqnarray*}
||\mathbf b_S||_{\infty} \leq \frac{1}{2 \tilde\eta_{\max} \log(2N/\epsilon)}.
\end{eqnarray*}
It is noted that $\frac{||\mathbf b_S||_{\infty}}{||\mathbf b_S||_2}$ is the ratio between peak and total energy of $\mathbf b_S$ for given $\mathcal S$. Let $\frac{||\mathbf b_S||_{\infty}}{||\mathbf b_S||_2} \leq R$ where $0< R \leq 1$.

Thus,  (\ref{minimum_exp}) is satisfied when \begin{eqnarray}
||\mathbf b_S||_2  \leq \min\left(\frac{1}{2 \sqrt{\eta_{\max} \log(2N/\epsilon)}}, \frac{1}{2 \tilde\eta_{\max} R\log(2N/\epsilon)} \right). \label{condition_b_S}
\end{eqnarray}

Let  $\beta = \min\left(\frac{1}{2 \sqrt{\eta_{\max} \log(2N/\epsilon)}}, \frac{1}{2 \tilde\eta_{\max} R \log(2N/\epsilon)} \right) $.  To have $Pr(||\mathbf b_S||_2 \leq \beta) \geq 1- \epsilon'$ for $0< \epsilon'<1$, we have to have, $Pr(||\mathbf b_S||_2 \geq \beta) \leq \epsilon'$. To compute $P_2= Pr( ||\mathbf b_S||_2   \geq \beta)$, we use the following theorem.
\begin{theorem}[\cite{Vershynin_book1}] \label{theorem_2}Let $\mathbf A $ be a $M\times k$ matrix whose rows $\mathbf A_i$'s  are independent random vectors in $\mathbb R^k$ with the common second moment matrix $\Sigma = \mathbb E \{\mathbf A_i \otimes \mathbf A_i\}$. Let $T_0$ be a number such that $||\mathbf A_i||_2 \leq \sqrt{T_0}$ almost surely for all $i$. Then for every $t\geq 0$, the following inequality holds with probability at least $1-k e^{-c''t^2}$
\begin{eqnarray*}
||\frac{1}{M} \mathbf A^T \mathbf A - \Sigma|| \leq \max\{||\Sigma||^{1/2}\delta, \delta^2\}
\end{eqnarray*}
where $\delta = t \sqrt{\frac{T_0}{M}}$ and $c'' > 0$ is an absolute constant. Equivalently, we have,
\begin{eqnarray}
||\Sigma||^{1/2} \sqrt{M} - t \sqrt{T_0} \leq s_{\min}(\mathbf A) \leq s_{\max}(\mathbf A)\nonumber \\
\leq ||\Sigma||^{1/2} \sqrt{M} + t \sqrt{T_0}\label{eq_theorem2}
\end{eqnarray}
with probability at least $1-k e^{-c''t^2}$. Further, when $T_0 = \mathcal O(k)$, (\ref{eq_theorem2}) reduces to
\begin{eqnarray}
||\Sigma||^{1/2} \sqrt{M} - t \sqrt{k} \leq s_{\min}(\mathbf A) \leq s_{\max}(\mathbf A)\nonumber \\
\leq ||\Sigma||^{1/2} \sqrt{M} + t \sqrt{k}\label{eq2_theorem2}
\end{eqnarray}
with probability at least $1-k e^{-c't^2}$ where $c'$ is a constant.
\end{theorem}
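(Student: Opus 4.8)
The plan is to view the empirical second-moment matrix as an average of independent rank-one matrices and to control its fluctuation about $\Sigma$ with a noncommutative (matrix) Bernstein inequality. First I would write
\[
\frac{1}{M}\mathbf A^T\mathbf A - \Sigma = \frac{1}{M}\sum_{i=0}^{M-1} Z_i, \qquad Z_i := \mathbf A_i \otimes \mathbf A_i - \Sigma,
\]
so that the $Z_i$ are independent, symmetric, mean-zero $k\times k$ matrices. Matrix Bernstein requires two inputs. The uniform norm bound follows because $\mathbf A_i\otimes\mathbf A_i$ and $\Sigma$ are positive semidefinite, giving $\|Z_i\|\leq\max(\|\mathbf A_i\|_2^2,\|\Sigma\|)\leq T_0$ almost surely, where I use $\|\Sigma\|\leq\mathrm{tr}\,\Sigma=\mathbb E\{\|\mathbf A_i\|_2^2\}\leq T_0$. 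The matrix-variance bound follows from the operator inequality $\mathbb E\{(\mathbf A_i\otimes\mathbf A_i)^2\}=\mathbb E\{\|\mathbf A_i\|_2^2\,\mathbf A_i\otimes\mathbf A_i\}\preceq T_0\,\Sigma$, whence $\mathbb E\{Z_i^2\}\preceq T_0\,\Sigma$ and $\|\sum_i\mathbb E\{Z_i^2\}\|\leq M T_0\|\Sigma\|$.

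I would then apply matrix Bernstein to $\sum_i Z_i$, obtaining for every $\rho>0$
\[
Pr\!\left(\|\tfrac{1}{M}\mathbf A^T\mathbf A-\Sigma\|\geq\rho\right)\leq 2k\,\exp\!\left(-\frac{M\rho^2/2}{T_0\|\Sigma\|+T_0\rho/3}\right).
\]
The dichotomy $\max(\|\Sigma\|^{1/2}\delta,\delta^2)$ with $\delta=t\sqrt{T_0/M}$ is exactly the signature of the two regimes of this denominator: when $\rho\lesssim\|\Sigma\|$ the variance term dominates, the exponent behaves like $M\rho^2/(T_0\|\Sigma\|)$, and calibrating it to $c''t^2$ yields $\rho\sim\|\Sigma\|^{1/2}\delta$; when $\rho\gtrsim\|\Sigma\|$ the almost-sure term dominates, the exponent behaves like $M\rho/T_0$, and the same calibration yields $\rho\sim\delta^2$. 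Taking the larger threshold and absorbing constants gives the stated bound at confidence $1-ke^{-c''t^2}$.

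The last step converts operator-norm control into the singular-value sandwich (\ref{eq_theorem2}). Since the eigenvalues of $\frac{1}{M}\mathbf A^T\mathbf A$ are $s_i^2(\mathbf A)/M$, Weyl's inequality gives $|s_{\min}^2(\mathbf A)/M-\lambda_{\min}(\Sigma)|\leq\|\frac{1}{M}\mathbf A^T\mathbf A-\Sigma\|$ and likewise for $s_{\max}$; combining with the tail bound and elementary square-root inequalities turns the deviation $\delta$ into the additive form $\|\Sigma\|^{1/2}\sqrt M\pm t\sqrt{T_0}$, and the reduction to (\ref{eq2_theorem2}) is just the substitution $T_0=\mathcal O(k)$.

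The main obstacle is the matrix Bernstein inequality itself: unlike the scalar case it does not follow from a direct moment-generating-function computation, because matrix exponentials do not factorize over sums, and its proof rests on Lieb's concavity theorem (or Golden--Thompson) to bound the trace of the matrix mgf. I would therefore invoke it as a known result and spend the real effort on the two quantitative inputs above, especially the operator inequality $\mathbb E\{(\mathbf A_i\otimes\mathbf A_i)^2\}\preceq T_0\,\Sigma$, which fixes the variance proxy at $T_0\|\Sigma\|$ and thereby pins the $\|\Sigma\|^{1/2}\delta$ branch. A classical alternative, and the route of \cite{Vershynin_book1}, replaces matrix Bernstein by symmetrization together with Rudelson's noncommutative Khintchine inequality to bound $\mathbb E\|\frac{1}{M}\sum_i\varepsilon_i\,\mathbf A_i\otimes\mathbf A_i\|$, followed by a separate concentration step; it delivers the same scaling and the same dimensional factor $k$.
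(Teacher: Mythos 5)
The paper itself offers no proof of this theorem --- it is imported directly from \cite{Vershynin_book1} (the covariance-estimation result for independent, almost-surely bounded rows) --- so the only meaningful benchmark is that source, and your sketch is essentially its proof. The decomposition into independent mean-zero summands $Z_i=\mathbf A_i\otimes\mathbf A_i-\Sigma$, the uniform bound $||Z_i||\leq T_0$ (using $||\Sigma||\leq\mathrm{tr}\,\Sigma=\mathbb E\{||\mathbf A_i||_2^2\}\leq T_0$ and the fact that the difference of two positive semidefinite matrices has norm at most the larger of the two norms), the variance proxy $\mathbb E\{Z_i^2\}\preceq\mathbb E\{||\mathbf A_i||_2^2\,\mathbf A_i\otimes\mathbf A_i\}\preceq T_0\Sigma$, and the two-regime calibration of the matrix Bernstein tail to produce $\max\{||\Sigma||^{1/2}\delta,\delta^2\}$ are exactly the ingredients of the cited argument, and each step you give is correct. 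One attribution quibble: the noncommutative Bernstein route you follow \emph{is} the proof in \cite{Vershynin_book1}; the symmetrization-plus-Rudelson argument you describe as ``the route of \cite{Vershynin_book1}'' is used there only for the weaker in-expectation version. The one substantive place where you are too quick --- though the fault lies equally with the theorem as transcribed in the paper --- is the ``equivalently'' step: Weyl's inequality gives $s_{\min}(\mathbf A)^2/M\geq\lambda_{\min}(\Sigma)-\max\{||\Sigma||^{1/2}\delta,\delta^2\}$, so the lower bound one actually obtains involves $\lambda_{\min}(\Sigma)^{1/2}$, and it coincides with the stated $||\Sigma||^{1/2}\sqrt M-t\sqrt{T_0}$ only when $\Sigma$ is (a multiple of) the identity, i.e.\ the isotropic case for which the source states the singular-value sandwich; for general $\Sigma$ the displayed lower bound on $s_{\min}(\mathbf A)$ does not follow from the operator-norm estimate, and your phrase ``and likewise for $s_{\max}$'' glosses over this asymmetry between the two singular values.
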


Let $\bar{\mathbf B}_S = \Gamma_S \mathbf B_S$ where $\Gamma_S = \sqrt{\frac{k}{\sum_{j=0}^{k-1} 2 \gamma_{S_j}\bar{\sigma}_{S_j}^2}}$ and  $\gamma_{S_j}$ (similarly $\bar{\sigma}_{S_j}$) corresponds to $\gamma_i$ where $i=S_j$ is the $j$-th element of $\mathcal S$ for $j=0,\cdots, k-1$ and $i$ can take any value from $0, \cdots, N-1$.
It is noted that $P_2$ can be bounded by
\begin{eqnarray*}
P_2 &\leq& Pr\left(s_{\min}(\mathbf B_S) \leq \frac{\sqrt {k}}{\beta}\right)\nonumber\\
& =&
Pr\left(s_{\min}(\bar{\mathbf B}_S) \leq \Gamma_S \frac{\sqrt {k}}{\beta}\right).
\end{eqnarray*}
Let $\Sigma_{B_S} = \mathbb E\{(\bar{\mathbf B}_S)_i \otimes (\bar{\mathbf B}_S)_i \}$ be the second moment matrix of $(\bar{\mathbf B}_S)_i $ where $(\bar{\mathbf B}_S)_i  $ is the $i$-th row of the matrix $\bar{\mathbf B}_S$. Then, we have $\Sigma_{B_S} = \Gamma_S^2 \mathrm{diag}([2\gamma_{S_0}\bar{\sigma}^2_{S_0}, \cdots, 2\gamma_{S_{k-1}}\bar{\sigma}^2_{S_{k-1}}]^T)$. Since $\mathbf E\{||(\bar{\mathbf B}_S)_i||_2^2 \}^{1/2} = \sqrt{k}$, we can take $T_0 = \mathcal O(k)$ where $T_0$ is a number such that $||(\bar{\mathbf B}_S)_i||_2 \leq \sqrt{T_0}$ almost surely for all $i$. Thus, from Theorem \ref{theorem_2}, we have,
\begin{eqnarray*}
Pr\left(s_{\min}(\bar{\mathbf B}_S) \leq ||\Sigma_{B_S}||^{1/2} \sqrt{M} - t \sqrt{k}\right) \leq k e^{-c't^2}
\end{eqnarray*}
where $c'$ is a constant. Letting $t = \sqrt{\frac{M}{k}} \left(||\Sigma_{B_S}||^{1/2} - \frac{\Gamma_S}{\beta}\sqrt{\frac{k}{M}}\right)$, for $P_2\leq \epsilon'$, it is required that,
\begin{eqnarray}
\frac{M}{k}\left(||\Sigma_{B_S}||^{1/2} - \frac{\Gamma_S}{\beta}\sqrt{\frac{k}{M}}\right)^2 \geq \frac{1}{c'}\log\left(\frac{k}{\epsilon'}\right).\label{inequality_1}
\end{eqnarray}
After a simple manipulation, it can be shown that (\ref{inequality_1}) reduces to
\begin{eqnarray}
 \sqrt{\frac{M}{k}}\geq \frac{1}{||\Sigma_B||^{1/2}}\left(\frac{\Gamma_S}{ \beta} + \sqrt{\frac{\log(k/\epsilon')}{c'}}\right).\label{bound_1}
\end{eqnarray}
Using the relations, $\frac{1}{\sqrt{\underset{j}{\max} (2\gamma_j \bar\sigma_j^2)}} \leq \Gamma_S \leq \frac{1}{\sqrt{\underset{j}{\min} (2\gamma_j \bar\sigma_j^2)}}$ and $||\Sigma_B||^{1/2} \geq \sqrt{\frac{\underset{j}{\min} (2\gamma_j \bar\sigma_j^2)}{\underset{j}{\max} (2\gamma_j \bar\sigma_j^2)}}$, (\ref{bound_1}) is  satisfied when
\begin{eqnarray}
\sqrt{\frac{M}{k}}\geq \sqrt{\frac{\eta_{\max}}{\eta_{\min}}}\left(\frac{1}{\beta\sqrt{2\eta_{\min}} } + \sqrt{\frac{\log(k/\epsilon')}{c'}}\right)\label{M_lower_bound}
\end{eqnarray}
where we define $\eta_{\max} = \underset{j}{\max} (\gamma_j \bar\sigma_j^2)$ and $\eta_{\min} = \underset{j}{\min} (\gamma_j \bar\sigma_j^2)$.
When $\beta = \frac{1}{2 \sqrt{\eta_{\max} \log(2N/\epsilon)}}$, (\ref{M_lower_bound}) reduces to,
\begin{eqnarray*}
M \geq \frac{\eta_{\max}}{\eta_{\min}} k \left(\sqrt{\frac{2 \eta_{\max}}{\eta_{\min}}}\sqrt{\log(2N/\epsilon)} + \sqrt{\frac{\log(k/\epsilon')}{c'}}\right)^2
\end{eqnarray*}
On the other hand, when $\beta = \frac{1}{2 \tilde\eta_{\max}  R \log(2N/\epsilon)}$, we have, \begin{eqnarray*}
M \geq {\frac{\eta_{\max}}{\eta_{\min}}}  k \left(\frac{\sqrt 2 \tilde\eta_{\max}}{\sqrt{\eta_{\min}}} R \log(2N/\epsilon) + \sqrt M \sqrt{\frac{\log(k/\epsilon')}{c'}}\right)^2
\end{eqnarray*}

 completing the proof.
\end{proof}

\section*{Appendix E}
 We have
 \begin{eqnarray*}
 ||\mathbf b_S||_2^2& =&  (\mathrm{sgn}(\mathbf x^S))^T (\mathbf B_S^T \mathbf B_S)^{-1}  \mathrm{sgn}(\mathbf x^S)\nonumber\\
 &\approx& \frac{1}{M}\underset{j\in \mathcal S}{\sum} \frac{1}{2\gamma_j \bar\sigma_j^2}
 \end{eqnarray*}
 with sufficiently large $M$. Thus,
 \begin{eqnarray*}
 ||\mathbf b_S||_2 \geq \sqrt{\frac{k}{2M \eta_{\max}}}.
 \end{eqnarray*}
 With sufficiently large $M$, the $i$-th element of $\mathbf b_S$ can be approximated by,
 \begin{eqnarray*}
 \mathbf b_S(i) \approx \frac{1}{M} \sum_{j=0}^{k-1} \frac{(\mathbf  B_S)_{ij}}{2\gamma_{S_j}\bar\sigma_{S_j}^2} \mathrm{sgn}(\mathbf x^S)(j).
 \end{eqnarray*}
 Thus, we have,
 \begin{eqnarray*}
 | \mathbf b_S(i)| &\leq&  \frac{1}{M} \sum_{j=0}^{k-1} \frac{|(\mathbf  B_S)_{ij}|}{2\gamma_{S_j}\bar\sigma_{S_j}^2}.
 \end{eqnarray*}
 It is noted that $\mathbb E\{|(\mathbf  B_S)_{ij}|\} = \gamma_{S_j} \bar\sigma_{S_j}$. Thus,
  \begin{eqnarray*}
 \mathbb E\{| \mathbf b_S(i)|\} &\leq&  \frac{1}{M} \sum_{j=0}^{k-1} \frac{1}{2\bar\sigma_{S_j}}\nonumber\\
 &\leq&  \frac{k}{2M\tilde\eta_{\min}}
 \end{eqnarray*}
 where $\tilde\eta_{\min} = \underset{j}{\min}\{\bar\sigma_j\}$. Then we have,
 \begin{eqnarray*}
 \mathbb E\left\{\frac{||\mathbf b_S||_{\infty}}{||\mathbf b_S||_2} \right\} \leq \frac{\sqrt{k\eta_{\max}}}{\sqrt{2M\tilde\eta_{\min}^2}}\leq \frac{\sqrt{k} }{\sqrt{2M}}.
 \end{eqnarray*}
Thus, $R$ can be considered to be
 \begin{eqnarray*}
 R = \mathcal O\left(\sqrt{ \frac{k }{2M}}\right).
 \end{eqnarray*}

\section*{Appendix F}
\subsection*{Proof of Theorem \ref{theorem_general_sub}}
We follow a similar approach as in the proof of Theorem \ref{theorem_1} in Appendix D. For a given support set $\mathcal S$, the failure probability in recovering $\mathbf x$ from (\ref{l_1}) is upper bounded by,
\begin{eqnarray}
P_e \leq NP_l \label{Pe_general}
\end{eqnarray}
where $P_l = Pr(|\langle (\mathbf B_S)^{\dagger}\mathbf b_l, \mathrm{sgn}(\mathbf x^S)\rangle| \geq 1)$.
Using Proposition 5.16 in \cite{Vershynin_book1},  $P_l$ can be upper bounded by,
\begin{eqnarray*}
P_l \leq 2 e^{-c_1 \min\left( \frac{1}{\rho_{\max}^2 ||\mathbf b_S||_2^2}, \frac{1}{\rho_{\max} ||\mathbf b_S||_{\infty}}\right)}
\end{eqnarray*}
where $c_1$ is a constant and $\mathbf b_S = (\mathbf B_S^{\dagger})^* \mathrm{sgn}(\mathbf x^S)$ as defined in Appendix D. Then, $P_e$ in (\ref{Pe_general}) can be bounded above  by $\epsilon_1$ if
\begin{eqnarray}
 \min\left( \frac{1}{\rho_{\max}^2 ||\mathbf b_S||_2^2}, \frac{1}{\rho_{\max} ||\mathbf b_S||_{\infty}}\right) \geq \frac{1}{c_1}\log(2N/\epsilon_1). \label{condition_2}
\end{eqnarray}

Let the matrix $\mathbf B$ be such that for any given $\mathcal S$, $\frac{||\mathbf b_S||_{\infty}}{||\mathbf b_S||_2} \leq R_1$ almost surely where $0 < R_1 < 1$. Then, (\ref{condition_2}) is satisfied when,
\begin{eqnarray}
||\mathbf b_S||_2 \leq  \min\left( \frac{1}{\rho_{\max}}\sqrt{\frac{c_1}{\log(2N/\epsilon_1)}}, \frac{c_1}{\rho_{\max} R_1 \log(2N/\epsilon_1)}\right)\label{bs_2}
\end{eqnarray}
Let $\beta_1 =  \min\left( \frac{1}{\rho_{\max}}\sqrt{\frac{c_1}{\log(2N/\epsilon_1)}}, \frac{c_1}{\rho_{\max} R_1 \log(2N/\epsilon_1)}\right)$.
We have,
\begin{eqnarray*}
Pr (||\mathbf b_S||_2 \geq \beta_1) \leq Pr\left(s_{\min}(\mathbf B_S) \leq \frac{\sqrt k}{\beta_1}\right).
\end{eqnarray*}
Using the Theorem \ref{theorem_2}, we get,
\begin{eqnarray*}
Pr(s_{\min}(\mathbf B_S) \leq ||\Sigma_{\mathbf B_S}||^{1/2} \sqrt M - t \sqrt{T_0}) \leq k e^{-c'_1 t^2}
\end{eqnarray*}
where $T_0$ is a number such that $||(\mathbf B_S)_i||_2 \leq \sqrt{T_0}$ for all $i$ and $c'_1$ is a constant.
Letting $t= \sqrt{\frac{M}{T_0}} \left(||\Sigma_{\mathbf B_S}||^{1/2} - \frac{\sqrt k}{\beta_1\sqrt M}\right)$, it can be shown that $Pr (||\mathbf b_S||_2 \geq \beta_1) \leq \epsilon_1'$ when
\begin{eqnarray}
M  \geq \frac{1}{||\Sigma_{\mathbf B_S}||} \left( \frac{\sqrt k}{\beta_1} + \sqrt{\frac{T_0 \log(k/\epsilon'_1)}{c_1}}\right)^2. \label{M_general_sub_1}
\end{eqnarray}
Thus, (\ref{M_general_sub_1}) is satisfied when,
\begin{eqnarray*}
M  \geq \frac{1}{\lambda_{\min}} \left( \frac{\sqrt k}{\beta_1} + \sqrt{\frac{T_0 \log(k/\epsilon'_1)}{c'_1}}\right)^2
\end{eqnarray*}
where $\lambda_{\min}$ is the minimum eigenvalue of $\Sigma_{B}^T \Sigma_B$ completing the proof.

\bibliographystyle{IEEEtran}
\bibliography{IEEEabrv,bib1}

\begin{thebibliography}{10}
\providecommand{\url}[1]{#1}
\csname url@samestyle\endcsname
\providecommand{\newblock}{\relax}
\providecommand{\bibinfo}[2]{#2}
\providecommand{\BIBentrySTDinterwordspacing}{\spaceskip=0pt\relax}
\providecommand{\BIBentryALTinterwordstretchfactor}{4}
\providecommand{\BIBentryALTinterwordspacing}{\spaceskip=\fontdimen2\font plus
\BIBentryALTinterwordstretchfactor\fontdimen3\font minus
  \fontdimen4\font\relax}
\providecommand{\BIBforeignlanguage}[2]{{%
\expandafter\ifx\csname l@#1\endcsname\relax
\typeout{** WARNING: IEEEtran.bst: No hyphenation pattern has been}%
\typeout{** loaded for the language `#1'. Using the pattern for}%
\typeout{** the default language instead.}%
\else
\language=\csname l@#1\endcsname
\fi
#2}}
\providecommand{\BIBdecl}{\relax}
\BIBdecl

\bibitem{haupt4}
J.~Haupt, W.~U. Bajwa, M.~Rabbat, and R.~Nowak, ``Compressed sensing for
  networked data,'' \emph{IEEE Signal Processing Magazine}, pp. 92--101, Mar.
  2008.

\bibitem{wang_ISPN1}
W.~Wang, M.~Garofalakis, and K.~Ramchandran, ``Distributed sparse random
  projections for refinable approximation,'' in \emph{ISPN}, Cambridge,
  Massachusetts,USA, April 2007, pp. 331--339.

\bibitem{Shen_Sensor1}
Y.~Shen, W.~Hu, R.~Rana, and C.~T. Chou, ``Nonuniform compressive sensing for
  heterogeneous wireless sensor networks,'' \emph{IEEE Sensors journal}, pp.
  2120--2128, June 2013.

\bibitem{Yang_TSP1}
G.~Yang, V.~Y.~F. Tan, C.~K. Ho, S.~H. Ting, and Y.~L. Guan, ``Wireless
  compressive sensing for energy harvesting sensor nodes,'' \emph{IEEE Trans.
  Signal Processing}, vol.~61, no.~18, pp. 4491--4505, Sept. 2013.

\bibitem{Baron1}
D.~Baron, M.~Duarte, S.~Sarvotham, M.~B. Wakin, and R.~G. Baraniuk,
  ``Distributed compressed sensing,'' \emph{Rice Univ. Dept. Elect. Comput.
  Eng. Houston, TX, Tech. Rep. TREE–0612}, Nov 2006.

\bibitem{Ling_TSP1}
Q.~Ling and Z.~Tian, ``Decentralized sparse signal recovery for compressive
  sleeping wireless sensor networks,'' \emph{IEEE Trans. Signal Processing},
  vol.~58, no.~7, pp. 3816--3827, July 2010.

\bibitem{Caione_II1}
C.~Caione, D.~Brunelli, and L.~Benini, ``Distributed compressive sampling for
  lifetime optimization in dense wireless sensor networks,'' \emph{IEEE Trans.
  Industrial Informatics}, vol.~1, no.~8, pp. 30--40, Feb 2012.

\bibitem{Sartipi_DCC1}
M.~Sartipi and R.~Fletcher, ``Energy-efficient data acquisition in wireless
  sensor networks using compressed sensing,'' in \emph{Data Compression
  Conference (DCC)}, 2011, pp. 223--232.

\bibitem{thakshila_icassp1}
T.~Wimalajeewa and P.~K. Varshney, ``Cooperative sparsity pattern recovery in
  distributed networks via distributed-\textsc{OMP},'' in \emph{Proc. Acoust.,
  Speech, Signal Processing (ICASSP)}, Vancouver, Canada, May 2013.

\bibitem{thakshila_TSP_14}
------, ``\textsc{OMP} based joint sparsity pattern recovery under
  communication constraints,'' \emph{IEEE Trans. Signal Processing}, vol.~62,
  no.~9, pp. 5059--5072, Oct. 2014.

\bibitem{yujiao_icassp1}
Y.~Zheng, T.~Wimalajeewa, and P.~K. Varshney, ``Probabilistic sensor management
  for target tracking via compressive sensing,'' in \emph{Proc. Acoust.,
  Speech, Signal Processing (ICASSP)}, Florence, Italy, May 2014.

\bibitem{Fazel_TWC13}
F.~Fazel, M.~Fazel, and M.~Stojanovic, ``Random access compressed sensing over
  fading and noisy communication channels,'' \emph{IEEE Trans. Wireless
  Commun.}, vol.~12, no.~5, pp. 2114--2125, May 2013.

\bibitem{Colonnese_EURA13}
S.~Colonnese, R.~Cusani, S.~Rinauro, G.~Ruggiero, and G.~Scarano, ``Efficient
  compressive sampling of spatially sparse fields in wireless sensor
  networks,'' \emph{EURASIP Journal on Advances in Signal Processing}, p. 19
  pages, 2013.

\bibitem{Achlioptas_J1}
D.~Achlioptas, ``Database-friendly random projections: Johnson-lindenstrauss
  with binary coins,'' \emph{Journal of Computer and System Sciences}, vol.~66,
  no.~4, pp. 671--687, 2003.

\bibitem{wang5}
W.~Wang, M.~J. Wainwright, and K.~Ramachandran, ``Information-theoretic limits
  on sparse signal recovery: Dense versus sparse measurement matrices,''
  \emph{IEEE Trans. Inform. Theory}, vol.~56, no.~6, pp. 2967--2979, Jun. 2010.

\bibitem{Gilbert_Proc1}
A.~Gilbert and P.~Indyk, ``Sparse recovery using sparse matrices,'' \emph{Proc.
  IEEE}, vol.~98, no.~6, pp. 937--947, 2010.

\bibitem{P_Li_KDD1}
P.~Li, T.~Hastie, and K.~Church, ``Very sparse random projections,'' in
  \emph{Proceedings of the 12th \textsc{ACM SIGKDD} international conference on
  Knowledge discovery and data mining (KDD)}, April 2006, pp. 287--296.

\bibitem{candes5}
E.~Cand$\grave{e}$s, J.~Romberg, and T.~Tao, ``Stable signal recovery from
  incomplete and inaccurate measurements,'' \emph{Communications on Pure and
  Applied Mathematics}, vol.~59, no.~8, pp. 1207--1223, Aug. 2006.

\bibitem{donoho1}
D.~Donoho, ``Compressed sensing,'' \emph{IEEE Trans. Inform. Theory}, vol.~52,
  no.~4, pp. 1289--1306, Apr. 2006.

\bibitem{candes_TIT1}
E.~J. Cand\`{e}s and Y.~Plan, ``A probabilistic and ripless theory of
  compressed sensing,'' \emph{IEEE Trans. Inform. Theory}, vol.~57, no.~11, pp.
  7235--7254, 2011.

\bibitem{Eldar_B1}
Y.~C. Eldar and G.~Kutyniok, \emph{Compressed Sensing: Theory and
  Applications}.\hskip 1em plus 0.5em minus 0.4em\relax Cambridge University
  Press, 2012.

\bibitem{Rauhut_J1}
H.~Rauhut, ``Compressive sensing and structured random matrices,'' \emph{In M.
  Fornasier, editor, Theoretical Foundations and Numerical Methods for Sparse
  Recovery, of Radon Series Comp. Appl. Math. de\textsc{G}ruyter}, vol.~9,
  2010.

\bibitem{Cohen_TIT_13}
K.~Cohen and A.~Leshem, ``Performance analysis of likelihood-based multiple
  access for detection over fading channels,'' \emph{IEEE Trans. Information
  Theory}, vol.~59, no.~4, pp. 2471--2481, Apr. 2013.

\bibitem{Chen_TSP_07}
Y.~Chen, Q.~Zhao, V.~Krishnamurthy, and D.~Djonin, ``Transmission scheduling
  for optimizing sensor network lifetime: \textsc{A} stochastic shortest path
  approach,'' \emph{IEEE Trans. Signal Processing}, vol.~55, no.~5, pp.
  2294--2309, May 2007.

\bibitem{Bajwa_TIT_07}
W.~U. Bajwa, J.~D. Haupt, A.~M. Sayeed, and R.~D. Nowak, ``Joint source-channel
  communication for distributed estimation in sensor networks,'' \emph{IEEE
  Trans. Information Theory}, vol.~53, no.~10, pp. 3629--3653, Oct. 2007.

\bibitem{Gastpar_TIT_03}
M.~Gastpar, B.~Rimoldi, and M.~Vetterli, ``To code, or not to code: Lossy
  source-channel communication revisited,'' \emph{IEEE Trans. Information
  Theory}, vol.~49, pp. 1147--1158, May 2003.

\bibitem{Gastpar_TIT_08}
M.~Gastpar, ``Uncoded transmission is exactly optimal for a simple gaussian
  sensor network,'' \emph{IEEE Trans. Information Theory}, vol.~54, no.~11, pp.
  5247--5251, Nov. 2008.

\bibitem{mergen1}
G.~Mergen and L.~Tong, ``Type based estimation over multiaccess channels,''
  \emph{IEEE Trans. Signal Processing}, vol.~54, no.~2, pp. 613--626, Feb.
  2006.

\bibitem{Marano_TSP_07}
S.~Marano, V.Matta, L.~Tong, and P.Willett, ``A likelihood-based multiple
  access for estimation in sensor networks,'' \emph{IEEE Trans. Signal
  Processing}, vol.~55, no.~11, pp. 5155--5166, Nov. 2007.

\bibitem{Vershynin_book1}
R.~Vershynin, \emph{Introduction to the non-asymptotic analysis of random
  matrices. In Compressed Sensing: Theory and Applications, Y. Eldar and G.
  Kutyniok, Eds.}\hskip 1em plus 0.5em minus 0.4em\relax Cambridge University
  Press., 2010.

\bibitem{Ayaz_Tech1}
U.~Ayaz and H.~Rauhut, ``Nonuniform sparse recovery with subgaussian
  matrices,'' in \emph{Technical Report, University of Bonn, Germany}, 2011.

\bibitem{Fuchs_TIT1}
J.-J. Fuchs, ``On sparse representations in arbitrary redundant bases,''
  \emph{IEEE Trans. Inform. Theory}, vol.~50, pp. 1341--1344, 2004.

\bibitem{Candes_l1magic2005}
E.~Candes and J.~Romberg, ``l1-magic: \textsc{R}ecovery of sparse signals via
  convex programming,'' \emph{http://www.acm.caltech.edu/l1magic/}, 2005.

\end{thebibliography}

\end{document}